\pgfplotsset{compat=newest,compat/show suggested version=false}
\newcommand{\ed}{\textcolor{black}}
\newcommand{\own}{\textcolor{black}}
\newcommand{\edd}{\textcolor{black}}
\DeclareMathOperator{\rank}{rank}
\begin{document}
	\newcommand{\ensemble}{\mathscr{C}}
\newcommand{\code}{\mathcal{C}}
\newcommand{\vecu}{\boldsymbol{u}}
\newcommand{\veci}{\boldsymbol{i}}
\newcommand{\vecf}{\boldsymbol{f}}
\newcommand{\vecv}{\boldsymbol{v}}
\newcommand{\vecp}{\boldsymbol{p}}
\newcommand{\vecx}{\boldsymbol{x}}
\newcommand{\vecone}{\boldsymbol{1}}
\newcommand{\vecuhat}{\hat{\boldsymbol{u}}}
\newcommand{\vecc}{\boldsymbol{c}}
\newcommand{\vecb}{\boldsymbol{b}}
\newcommand{\vecchat}{\hat{\boldsymbol{c}}}
\newcommand{\vecy}{\boldsymbol{y}}
\newcommand{\vecz}{\boldsymbol{z}}
\newcommand{\F}{\boldsymbol{F}}
\newcommand{\B}{\boldsymbol{B}}
\newcommand{\G}{\boldsymbol{G}}
\newcommand{\V}{\boldsymbol{V}}
\newcommand{\GK}{\boldsymbol{K}}
\newcommand{\Gsys}{\G_{\mathsf{i},\mathsf{sys}}}
\newcommand{\Gnsys}{\G_{\mathsf{i},\mathsf{nsys}}}
\newcommand{\Per}{\boldsymbol{\Pi}}
\newcommand{\I}{\boldsymbol{I}}
\newcommand{\perP}{\boldsymbol{P}}
\newcommand{\perS}{\boldsymbol{S}}
\newcommand{\GH}[1]{\bm{\mathsf{G}}_{#1}}
\newcommand{\T}[1]{\bm{\mathsf{T}}{(#1)}}

\newcommand{\mi}{\mathrm{I}}
\newcommand{\Prob}{P}
\newcommand{\Z}{\mathrm{Z}}
\newcommand{\SPC}{\mathcal{S}}
\newcommand{\Rep}{\mathcal{R}}
\newcommand{\f}{\mathrm{f}}
\newcommand{\W}{\mathrm{W}}
\newcommand{\A}{\mathrm{A}}
\newcommand{\SC}{\mathrm{SC}}
\newcommand{\Q}{\mathrm{Q}}
\newcommand{\MAP}{\mathrm{MAP}}
\newcommand{\E}{\mathbb{E}}
\newcommand{\ent}{\mathrm{H}}
\newcommand{\U}{\mathrm{U}}
\newcommand{\Ham}{\mathrm{H}}

\newcommand{\Amin}{\mathrm{A}_{\mathrm{min}}}
\newcommand{\dmin}{d}
\newcommand{\erfc}{\mathrm{erfc}}
\newcommand{\inner}[1]{\langle#1\rangle}

\newcommand{\de}{\mathrm{d}}

\newcommand{\decoRule}{\rule{\textwidth}{.4pt}}

\newcommand{\oleq}[1]{\overset{\text{(#1)}}{\leq}}
\newcommand{\oeq}[1]{\overset{\text{(#1)}}{=}}
\newcommand{\ogeq}[1]{\overset{\text{(#1)}}{\geq}}
\newcommand{\ogeql}[2]{\overset{#1}{\underset{#2}{\gtreqless}}}
\newcommand{\argmax}[1]{\underset{#1}{\mathrm{arg \, max}}\,}
\newcommand\numeq[1]%
{\stackrel{\scriptscriptstyle(\mkern-1.5mu#1\mkern-1.5mu)}{=}}
\newcommand\numleq[1]%
{\stackrel{\scriptscriptstyle(\mkern-1.5mu#1\mkern-1.5mu)}{\leq}}

\newcommand{\pscd}{\gl{\Prob(\mathcal{E}_{\SCD})}}
\newcommand{\uED}{\gl{\hat{u}}}
\newcommand{\LED}{\gl{L_i^{(\ED)}}}
\newcommand{\inverson}[1]{\gl{\mathbb{I}\left\{#1\right\}}}

\newtheorem{mydef}{Definition}
\newtheorem{prop}{Proposition}
\newtheorem{theorem}{Theorem}

\newtheorem{proposition}{Proposition}
\newtheorem{lemma}{Lemma}
\newtheorem{remark}{Remark}
\newtheorem{example}{Example}
\newtheorem{definition}{Definition}
\newtheorem{corollary}{Corollary}

%\renewcommand{\qed}{\hfill\ensuremath{\blacksquare}}

%%%%%%%%%%%%%%%%%%% Colors %%%%%%%%%%%%%%%%%%%%%%%%%%%%%%%%
\definecolor{lightblue}{rgb}{0,.5,1}
\definecolor{normemph}{rgb}{0,.2,0.6}
\definecolor{supremph}{rgb}{0.6,.2,0.1}
\definecolor{lightpurple}{rgb}{.6,.4,1}
\definecolor{gold}{rgb}{.6,.5,0}
\definecolor{orange}{rgb}{1,0.4,0}
\definecolor{hotpink}{rgb}{1,0,0.5}
\definecolor{newcolor2}{rgb}{.5,.3,.5}
\definecolor{newcolor}{rgb}{0,.3,1}
\definecolor{newcolor3}{rgb}{1,0,.35}
\definecolor{darkgreen1}{rgb}{0, .35, 0}
\definecolor{darkgreen}{rgb}{0, .6, 0}
\definecolor{darkred}{rgb}{.75,0,0}
\definecolor{midgray}{rgb}{.8,0.8,0.8}
\definecolor{darkblue}{rgb}{0,.25,0.6}

\definecolor{lightred}{rgb}{1,0.9,0.9}
\definecolor{lightblue}{rgb}{0.9,0,0.0}
\definecolor{lightpurple}{rgb}{.6,.4,1}
\definecolor{gold}{rgb}{.6,.5,0}
\definecolor{orange}{rgb}{1,0.4,0}
\definecolor{hotpink}{rgb}{1,0,0.5}
\definecolor{darkgreen}{rgb}{0, .6, 0}
\definecolor{darkred}{rgb}{.75,0,0}
\definecolor{darkblue}{rgb}{0,0,0.6}

\definecolor{bgblue}{RGB}{245,243,253}
\definecolor{ttblue}{RGB}{91,194,224}

\definecolor{dark_red}{RGB}{150,0,0}
\definecolor{dark_green}{RGB}{0,150,0}
\definecolor{dark_blue}{RGB}{0,0,150}
\definecolor{dark_pink}{RGB}{80,120,90}

\begin{acronym}
	\acro{APP}{a-posteriori probability}
	\acro{AWGN}{additive white Gaussian noise}
	\acro{BIAWGN}{binary-input additive white Gaussian noise}
	\acro{B-DMC}{binary-input discrete memoryless channel}
	\acro{B-DMSC}{binary-input discrete memoryless symmetric channel}
	\acro{BMS}{binary memoryless symmetric}
	\acro{BCJR}{Bahl, Cocke, Jelinek, and Raviv}
	\acro{BEC}{binary erasure channel}
	\acro{BER}{bit error rate}
	\acro{BLEP}{block error probability}
	\acro{BLER}{block error rate}
	\acro{BP}{belief propagation}
	\acro{BSC}{binary symmetric channel}
	\acro{CER}{codeword error rate}
	\acro{CN}{check node}
	\acro{CRC}{cyclic redundancy check}
	\acro{DE}{density evolution}
	\acro{eBCH}{extended Bose-Chaudhuri-Hocquengham}
	\acro{FER}{frame error rate}
	\acro{GA}{Gaussian approximation}
	\acro{i.i.d.}{independent and identically distributed}
	\acro{IO-WE}{input-output weight enumerator}
	\acro{IR-WE}{input-redundancy weight enumerator}
	\acro{IO-WEF}{input-output weight enumerating function}
	\acro{IR-WEF}{input-redundancy weight enumerating function}
	\acro{JIO-WE}{joint IO-WE}
	\acro{JIR-WE}{joint IR-WE}
	\acro{JWE}{joint WE}
	\acro{LDPC}{low-density parity-check}
	\acro{LHS}{left-hand side}
	\acro{LLR}{log-likelihood ratio}
	\acro{MAP}{maximum a-posteriori}
	\acro{MC}{metaconverse}
	\acro{ML}{maximum-likelihood}
	\acro{NA}{normal approximation}
	\acro{PC}{product code}
	\acro{PW}{polarization weight}
	\acro{pdf}{probability density function}
	\acro{RCB}{random coding bound}
	\acro{RCU}{random coding union}
	\acro{RM}{Reed--Muller}
	\acro{dRM}{dynamic Reed--Muller}
	\acro{RHS}{right-hand side}
	\acro{r.v.}{Random variable}
	\acro{p.m.f.}{probability mass function}
	\acro{SPC}{single parity-check}
	\acro{SC}{successive cancellation}
	\acro{SCC}{super component codes}
	\acro{SCL}{successive cancellation list}
	\acro{SCI}{Successive cancellation inactivation}
	\acro{SISO}{soft-input soft-output}
	\acro{SNR}{signal-to-noise ratio}
	\acro{UB}{union bound}
	\acro{TUB}{truncated union bound}
	\acro{VN}{variable node}
	\acro{WE}{weight enumerator}
	\acro{WEF}{weight enumerating function}
	\acro{PAC}{polarization-adjusted convolutional}
\end{acronym}
	\title{An Information-Theoretic Perspective on Successive Cancellation List Decoding and Polar Code Design}
	\author{Mustafa Cemil Co\c{s}kun, \IEEEmembership{Student Member, IEEE}, Henry D. Pfister, \IEEEmembership{Senior Member, IEEE}
	\thanks{This paper was presented in part at the IEEE International Conference on Signal Processing and Communications, July 2020, Bangalore, India \cite{CP20}.}
	\thanks{Mustafa Cemil Co\c{s}kun is with the Institute for Communications Engineering (LNT), Technical University of Munich, Munich, Germany (email: mustafa.coskun@tum.de). Parts of this work were carried out when he was also with the Institute of Communications and Navigation of the German Aerospace Center (DLR), We{\ss}ling, Germany, and with the Department of Electrical and Computer Engineering, Duke University, Durham, USA.}
	\thanks{Henry D. Pfister is with the Department of Electrical and Computer Engineering, Duke University, Durham, USA (email: henry.pfister@duke.edu).}
	\thanks{The work of M. C. Co\c{s}kun was supported in part by the Helmholtz Gemeinschaft through the HGF-Allianz DLR@Uni project Munich Aerospace via the research grant ``Efficient Coding and Modulation for Satellite Links with Severe Delay Constraints'' and in part by the German Research Foundation (DFG) under Grant {KR 3517/9-1}. The work of H. D. Pfister was supported in part by the National Science Foundation under Grant No. 1718494. Any opinions, findings, conclusions, and recommendations expressed in this material are those of the authors and do not necessarily reflect the views of these sponsors.}
	}

% make the title area
	\maketitle

% As a general rule, do not put math, special symbols or citations
% in the abstract
	\begin{abstract}
		\edd{This work identifies information-theoretic quantities that are closely related to the required list size on average for successive cancellation list (SCL) decoding to implement maximum-likelihood decoding \ed{over general binary memoryless symmetric (BMS) channels}}. It also provides \ed{upper and lower bounds} for these quantities that can be computed efficiently for very long codes. \ed{For the binary erasure channel (BEC), we} provide a simple method to estimate the mean \ed{accurately} via density evolution. \edd{The analysis shows how to modify, e.g., Reed-Muller codes,} to improve the performance when practical list sizes, e.g., $L\in\edd{[8, 1024]}$, are adopted. \ed{Exemplary constructions with block lengths $N\in\{128,512\}$ outperform polar codes of 5G over the binary-input additive white Gaussian noise channel.}
		
		\ed{It is further shown that t}here is a concentration around the mean of the logarithm of the required list size for sufficiently large block lengths, \ed{over discrete-output BMS channels. We provide the probability mass functions (p.m.f.s) of this logarithm, over the BEC, for a sequence of the modified RM codes with an increasing block length via simulations, which illustrate that the p.m.f.s concentrate around the estimated mean.}
	\end{abstract}
	\begin{keywords}
		Successive cancellation list decoding, Reed-Muller codes, polar codes, dynamic frozen bits, code design.
	\end{keywords}
	\markboth
	{submitted to IEEE Transactions on Information Theory}
	{}
	\section{Introduction}\label{sec:intro}

Polar codes constitute the first deterministic construction of capacity-achieving codes for \ac{BMS} channels with an efficient decoder\cite{arikan2009channel}. While they achieve capacity under \ac{SC} decoding, their initial performance was not competitive with \ac{LDPC} and Turbo codes. This changed with the advent of \ac{SCL} decoding and the addition of \ac{CRC} outer codes\cite{tal15}. Due to their competitive performance for short block lengths\cite{Coskun18:Survey}, they have been adopted by the 5G standard\cite{5G20}. Many authors have optimized polar codes and their variants to improve their performance for SCL decod\ed{ing \cite{trifonov16,Trifonov17,Yuan19,Elkelesh19,Fazeli19,Rowshan19,arikan2019sequential,MV20,Trifonov20}}.

An important property of SCL decod\ed{ing} is that its performance matches the \ac{ML} decoder if the list size is sufficiently large. \ed{This work is motivated by} the theoretical question: ``What list size suffices to \ed{approach} ML decoding performance for a given channel quality?''. Simulating SCL decoding with a large list size is infeasible for long codes and doesn't provide much insight \ed{into designing codes for \ac{SCL} decoding}. Current works rely mostly on heuristics, e.g., see \cite{Trifonov17,Yuan19,Elkelesh19,Rowshan19,MV20}.

By noting that \ac{SCL} decoding operates sequentially in $N$ stages, where $N$ is the block length, \edd{we identify information-theoretic quantities associated with the required list size on average.} For general \ac{BMS} channels, we provide upper and lower bounds that can be computed efficiently even for very long codes. Our analysis suggests new code design \edd{criterion} for \ac{SCL} \ed{decoding} with practical list sizes, e.g., $L\in\edd{[8, 1024]}$ \edd{by modifying, e.g., \ac{RM} codes}\cite{reed,muller}. This is illustrated via \own{exemplary} constructions for short- to moderate-length regime, e.g., $N\in\{128,512\}$\ed{, which outperform polar codes of the 5G standard}. For the \ac{BEC}, we provide a simple Markov chain approximation to compute \ed{the} mean using only density evolution and simulations show that it is quite accurate.

\ed{Motivated by the numerical results of \cite{CP20}, we show that, for discrete-output \ac{BMS} channels,} the logarithm of required list size \ed{at each decoding stage concentrates around the mean for large block lengths using techniques similar to those for analyzing LDPC codes \cite{LMSS98,RU01}. Over the \ac{BEC}, we provide its \ac{p.m.f.} for modified \ac{RM} codes with several block lengths. Simulations illustrate the fact that, with increasing block lengths, the \acp{p.m.f.} concentrate around the estimated mean.}

The paper is organized as follows. In Section \ref{sec:prelim}, we provide the preliminaries needed for the rest of the work. In Section \ref{sec:analysis}, we introduce the key quantities and use them to analyze the considered list decoders. The convergence properties are studied for general \ac{BMS} channels as well as for the BEC in Section \ref{sec:concentration}. Numerical results are presented in Section \ref{sec:numerical} for the \ac{BIAWGN} channel and the BEC. Conclusions follow in Section \ref{sec:conc}.
	\section{Background}\label{sec:prelim}

\acp{r.v.} are denoted by upper case letters, e.g., $X$, and their realizations by the lower case counterparts, e.g., $x$. Random vectors are denoted by $X_{i}^{j}=(X_{i},X_{i+1},\ldots,X_{j})$ and their realizations by $x_{i}^{j}$. If $j<i$, then $X_{i}^{j}$ is void. We use $[N]$ for the set $\left\{ 1,2,\ldots N\right\}$. Subvectors with indices in $\mathcal{S}\subseteq[N]$ are denoted by $x_{\mathcal{S}}=(x_{i_{1}},\ldots,x_{i_{|\mathcal{S}|}})$ where $i_{1}<\cdots<i_{|\mathcal{S}|}$ enumerates the elements in $\mathcal{S}$ with $|\mathcal{S}|$ being the cardinality of the set $\mathcal{S}$. We use $x_{\sim i}$ for the vector $x_{[N]\setminus\{i\}}$. %The notation $x_{\mathcal{A}}\cdot y_{\mathcal{A}}$ is used for dot product of two binary vectors.
Bold capital letters are used for matrices, e.g., $\boldsymbol{X}$.

Consider a BMS channel with binary input $X\in\left\{ 0,1\right\}$ and general output $Y\in\mathbb{\mathcal{Y}}$, i.e., $W:X\rightarrow Y$. The transition probabilities are given by $W(y|x)\triangleq\Pr(Y=y|X=x)$.

\subsection{Polar and Reed-Muller Codes}

\label{sec:polar_RM}

The polar transform of length $N=2^n$ is denoted by \ed{$\G_{N}\triangleq \B_N \G_{2}^{\otimes n}$, where $\B_N$ is the $N\times N$ bit-reversal matrix~\cite[Sec. VII.B]{arikan2009channel} and $\G_{2}^{\otimes n}$} is the $n$-fold Kronecker product of the $2\times2$ binary Hadamard matrix
\begin{equation}\label{eq:building_block}
	\G_{2}\triangleq\left[\begin{array}{cc}
	1 & 0\\
	1 & 1
	\end{array}\right].
\end{equation}
This is the key building block in Ar{\i}kan's polar codes\cite{arikan2009channel} and \ac{RM} codes\cite{reed,muller}.

To define an $(N,K)$ polar or \ac{RM} code, one partitions the input vector into sub-vectors that carry information and frozen bits whose values are known by the receiver. The set of information and frozen indices are denoted, respectively, by $\mathcal{A}\subseteq[N]$ with $|\mathcal{A}| = K$, and $\mathcal{F}\triangleq[N]\setminus\mathcal{A}$. Thus, the input vector $u_{1}^{N}$ can be split into information bits $u_{\mathcal{A}}$ and frozen bits $u_{\mathcal{F}}$. Then, the codeword $x=u\G_{N}$ is transmitted over the channel. This construction enables efficient SC decoding \cite{arikan2009channel,stolte2002rekursive}. For polar codes, the set $\mathcal{A}$ is chosen to minimize a tight upper bound on the error probability under SC decoding; however, the set $\mathcal{A}$ for an $r$-th order \ed{\ac{RM} code of length $N$}, denoted by RM$(r,\edd{n})$, consists of the indices of rows in $\G_{N}$ with the Hamming weight at least equal to $2^{\edd{n}-r}$.

\subsection{Successive Cancellation Decoding}

Let $y_{1}^{N}$ be observations of the bits $x_{1}^{N}$ through $N$ copies of the BMS channel $W$. SC decod\ed{ing} takes the following steps sequentially from $i=1$ to $i=N$. If $i\in\mathcal{F}$, it sets $\hat{u}_{i}$ to its frozen value. If $i\in\mathcal{A}$, it computes the soft estimate $p_{i}(\hat{u}_1^{i-1})\triangleq\Pr(U_{i}=1|Y_{1}^{N}=y_{1}^{N},U_1^{i-1}=\hat{u}_1^{i-1})$, and makes a hard decision accordingly as
\begin{equation}\label{eq:dec_fnc}
	\hat{u}_{i}=\begin{cases}
	0 & \text{if }p_{i}(\hat{u}_1^{i-1})<\frac{1}{2}\\
	1 & \text{otherwise}.
	\end{cases}
\end{equation}
%An error is declared if $\hat{u}_{1}^{N}\neq u_{1}^{N}$ (equivalently if $\hat{u}_{\mathcal{A}}\neq u_{\mathcal{A}}$).

To understand SC decod\ed{ing}, we focus on the effective channels seen by each of the input bits in $u_{1}^{N}$\cite{arikan2009channel}. SC decod\ed{ing} uses the entire $y_{1}^{N}$ vector and all past decisions $\hat{u}_{1}^{i-1}$ to generate the soft estimate $p_{i}(\hat{u}_1^{i-1})$ and the hard decision $\hat{u}_{i}$ for $u_{i}$. Let $W_{N}^{(i)}$ denote the effective (virtual) channel seen by $u_{i}$ during SC decoding\cite{arikan2009channel}. If all past bits $u_{1}^{i-1}$ are provided by a genie, then this channel is easier to analyze. The effective channel $W_{N}^{(i)}:U_{i}\rightarrow(Y_{1}^{N},U_{1}^{i-1})$ is then defined by its transition probabilities
\begin{equation}\label{eq:effective_channel}
	W_{N}^{(i)}\left(y_{1}^{N},u_{1}^{i-1}|u_{i}\right)\triangleq\!\!\sum_{u_{i+1}^{N}\in\{0,1\}^{N-i}}\frac{1}{2^{N-1}}W_{N}(y_{1}^{N}|u_{1}^{N}\G_{N})
\end{equation}
where $W_{N}(y_{1}^{N}|x_{1}^{N})\triangleq\prod_{i=1}^{N}W(y_{i}|x_{i})$.

\subsection{Successive Cancellation List Decoding}\label{sec:SCL_decoding}

SCL decoding of RM codes (and related subcodes) was introduced in\cite{Dumer01}. \ed{It was then} extended to optimized constructions of generalized concatenated codes in \cite{stolte2002rekursive}. These approaches became popular after \cite{tal15} applied them to polar codes that were combined with an outer CRC code to \ed{improve their finite-length performance}.

SCL decoding recursively computes $Q_{i}(\tilde{u}_{1}^{i},y_1^N)$\own{, which is proportional to} $\Pr\left(U_{1}^{i}=\tilde{u}_{1}^{i},Y_{1}^{N}=y_{1}^{N}\right)$ for $i=1,\dots,N$ via the SC message passing rules for partial input sequences $\tilde{u}_{1}^{i}\in \mathcal{U}_i \subseteq \left\{ 0,1\right\} ^{i}$\own{, which are also called decoding paths.} Discarding $y_1^N$ for the ease of notation, we refer to $Q_{i}(\tilde{u}_{1}^{i})$ as the \emph{myopic likelihood} of the sequence $\tilde{u}_{1}^{i}$ as it does not use the receiver's knowledge of frozen bits after $u_{i}$.

Let $\mathcal{U}_{i-1}\subseteq\left\{ 0,1\right\}^{i-1}$ be a subset satisfying $\left|\mathcal{\mathcal{U}}_{i-1}\right|=L$ and assume that $Q_{i-1}(\tilde{u}_{1}^{i-1})$ is known for some $\tilde{u}_{1}^{i-1}\in\mathcal{\mathcal{U}}_{i-1}$. Then, for $\tilde{u}_{i}\in\{0,1\}$, one can write
\begin{align}\label{eq:recursive_scl}
&\!\!\!\!\!\!\!\!Q_{i}(\tilde{u}_{1}^{i})\propto\Pr\left(U_{1}^{i}=\tilde{u}_{1}^{i},Y_{1}^{N}=y_{1}^{N}\right) \nonumber\\
&\propto Q_{i-1}(\tilde{u}_{1}^{i-1})\Pr\left(U_{i}=\tilde{u}_{i}|Y_{1}^{N}=y_{1}^{N},U_{1}^{i-1}=\tilde{u}_{1}^{i-1}\right)
\end{align}
where the right-most term can be computed efficiently by the SC decoder starting from $Q_{0}(\tilde{u}_{1}^{0})\triangleq 1$. This results in $Q_{i}(\tilde{u}_{1}^{i})$ values for $2L$ partial sequences \ed{produced from $\mathcal{U}_{i-1}$}. One then prunes the list down to $L$ sequences by keeping only most likely paths according to \eqref{eq:recursive_scl} for SCL \ed{decoding} with list size $L$. Note that if $u_{i}$ is frozen, then the decoder simply extends all paths with correct frozen bit. After the $N$-th decoding stage, the estimate $\hat{u}_{1}^{N}$ is chosen as the candidate maximizing the function $Q_{N}(\tilde{u}_{1}^{N})$.

For the BEC, all partial input sequences $u_{1}^{i}\in\{0,1\}^{i}$ with $Q_{i}(u_{1}^{i})>0$ are equiprobable. Hence, for a given $u_{1}^{i-1}$, we have $p_{i}(u_{1}^{i-1})\in\left\{ 0,\nicefrac{1}{2},1\right\}$. If $p_{i}(u_{1}^{i-1})\in\left\{ 0,1\right\}$, then $u_{i}$ is known perfectly at the receiver. However, if $p_{i}(u_{1}^{i-1})=\nicefrac{1}{2}$ and $u_{i}$ is not frozen, then $\hat{u}_{i}$ can be seen as an erasure. When an information bit is decoded to an erasure, it is replaced by both possible values (assuming the list size is large enough) and decoding proceeds separately under these two hypotheses. The number of hypotheses (i.e., partial input sequences) is halved if the decoded value \edd{for a frozen bit contradicts with its actual value for different partial input sequences}\cite[Appendix A]{Neu2019}. An error is declared if, at any decoding stage, the number of hypotheses exceeds $L$ or if there are more than one hypothesis at \ed{the end of} the process, i.e., if $\exists\tilde{u}_{1}^{N}\in\{0,1\}^{N}$, $\tilde{u}_{1}^{N}\neq u_1^N$ with $Q_{N}(\tilde{u}_{1}^{N})>0$.

\subsection{Successive Cancellation Inactivation Decoding}

\own{\ac{SCI}} decoding is a more efficient version of SCL decoding for the BEC proposed in \cite{CNP20}. It follows the same decoding schedule as SCL decod\ed{ing} but instead replaces an erased information bit by \own{a variable}, i.e., the bit is \emph{inactivated} \cite{Shokrollahi06,Measson08,Eslami10}. Later, \edd{the decoding function for any frozen bit provides a linear equation if its output is not an erasure \cite[Eq. (3)]{CNP20}. Thanks to this equation, the} inactivated bit may be resolved. This can be done at \edd{any} stage of the decoding process \edd{when the obtained equation is not a trivial one} or delayed until the end. When \edd{a dummy variable is} eliminated without delaying to the end of decoding, we refer to this as a \emph{consolidation} event. In this work, our focus is on SCI \ed{decoding} with consolidations.

SCI decod\ed{ing} can inactivate multiple bits if required. If the maximum number of inactivations is not bounded, then it implements ML decoding. For further details, see \cite{CNP20}. 

\subsection{Dynamic Frozen Bits}
\label{sec:polar_ebch}

An important observation in \cite{trifonov16} is that decoders having the successive nature (e.g., SCL \own{decoding}) still work (with slight modifications) \ed{even} if, for some $i\in\mathcal{F}$, the bit $u_{i}$ is a function of a set of preceding information bits. A frozen bit whose value depends on past inputs is called dynamic. 

A polar code with dynamic frozen bits is defined by its information indices $\mathcal{A}$ and a matrix that defines each frozen bit as a linear combination of preceding information bits. There are now a number of approaches for choosing $\mathcal{A}$ and dynamic frozen bit constraints\edd{, e.g., \cite{trifonov16,Trifonov17,Yuan19,Trifonov20,MV20}, which aim at having a resulting code with large minimum distance and/or small multiplicity of minimum weight codewords as well as low to moderate decoding complexity for target \acp{BLER}.} In this work, after specifying $\mathcal{A}$, we define each frozen bit to be a uniform random linear combination of information bits preceding it.
	\section{Analysis of the List Decoders}\label{sec:analysis}

An important property of list decoding is that, if the correct codeword is on the list at the end of decoding, then the error probability is upper bounded by that of ML decod\ed{ing}. We study how large the list should be at each stage so that the correct codeword is likely to be on the list.

\subsection{An Information-Theoretic Perspective}

Consider a length-$N$ polar code with SCL decoding after the $m$-th decoding stage. Since SCL decoding does not use future frozen bits, we focus on the subset of length-$m$ input \ed{sequences} that have significant conditional entropy given the channel observation. An important insight is that, after observing $Y_{1}^{N}$, the uncertainty in $U_{1}^{m}$ is quantified by the entropy
\begin{equation}\label{eq:conditional_entropy}
	H\left(U_{1}^{m}|Y_{1}^{N}\right)=\sum_{i=1}^{m}H\left(U_{i}|U_{1}^{i-1},Y_{1}^{N}\right)
\end{equation}
where $U_{1}^{N}$ is assumed to be uniform over $\left\{0,1\right\}^{N}$. This is exactly true if the first $m$ bits are all information bits, i.e., if $[m]\subseteq\mathcal{A}$. If $[m]$ contains also frozen indices, however, then the situation is more complicated.\edd{\footnote{\edd{Here, it is assumed that the frozen bits $U_{\mathcal{F}}$ are also uniformly distributed and SCL decoding learns them causally. In the case of dynamic frozen bits, uniform random constants are added, which are revealed causally as in the case of other frozen bits.}}}

Let $\mathcal{A}^{(m)}\triangleq\mathcal{\mathcal{A}}\cap[m]$ and $\mathcal{F}^{(m)}\triangleq\mathcal{F}\cap[m]$ be the sets containing information and frozen indices within the first $m$ input bits, respectively. Now consider an experiment where the frozen bits $U_i$ with $i\in\mathcal{F}^{(m)}$ are uniform and independent of $U_{1}^{i-1}$. Using \eqref{eq:conditional_entropy} naively with the assumption that $U_{\mathcal{F}^{(m)}}$ is not known to the receiver would cause an overestimate of $H\left(U_{1}^{m}|Y_{1}^{N}\right)$ by an amount of at least $\sum_{i\in\mathcal{F}^{(m)}}H\left(U_{i}|U_{1}^{i-1},Y_{1}^{N}\right)$. In addition to this, the frozen bits $U_{\mathcal{F}^{(m)}}$ may reveal additional information about the previous information bits.

To better understand the uncertainty within the first $m$ input bits during SCL decoding, we define the \ed{quantity
\begin{equation}\label{def:cond_entropy}
    d_{m}\left(y_1^N\right)\triangleq H\left(U_{\mathcal{A}^{(m)}}|Y_{1}^{N}=y_1^N,U_{\mathcal{F}^{(m)}}\right)
    %d_{m}\left(y_1^N\right)\triangleq \E\left[-\log_2 P_{U_{\mathcal{A}^{(m)}}|Y_1^N,U_{\mathcal{F}^{(m)}}}\left(U_{\mathcal{A}^{(m)}}|Y_1^N,U_{\mathcal{F}^{(m)}}\right)|Y_1^N = y_1^N\right]
\end{equation}
and the corresponding \ac{r.v.} \edd{is denoted as $D_{m}$ that takes on the value $d_m\left(y_1^N\right)$ when $Y_1^N = y_1^N$.}
%\begin{equation}
%	D_{m}\triangleq d_{m}\left(Y_1^N\right).
%\end{equation}
Note that the mean of $D_m$ corresponds to the conditional entropy, i.e.,
\begin{equation}
	\Bar{D}_m \triangleq \E\left[d_{m}\left(Y_1^N\right)\right] = H\left(U_{\mathcal{A}^{(m)}}|Y_{1}^{N},U_{\mathcal{F}^{(m)}}\right)
\end{equation}
and we define a difference sequence as
\begin{equation}
	\Delta_m \triangleq \Bar{D}_m - \Bar{D}_{m-1}.
\end{equation}}
Observe that, if $U_{m}$ is an information bit, then we have
\begin{align}\label{eq:info_bit_analysis}
	\Delta&_{m}=\!H\left(U_{\mathcal{A}^{(m)}}|Y_{1}^{N},U_{\mathcal{F}^{(m)}}\right)\!-\!H\left(U_{\mathcal{A}^{(m-1)}}|Y_{1}^{N},U_{\mathcal{F}^{(m-1)}}\right)\nonumber\\
	&\!=\!H\left(U_{\mathcal{A}^{(m)}}|Y_{1}^{N},U_{\mathcal{F}^{(m-1)}}\right)-H\left(U_{\mathcal{A}^{(m-1)}}|Y_{1}^{N},U_{\mathcal{F}^{(m-1)}}\right)\nonumber\\
	&\!=\!H\left(U_{\mathcal{A}^{(m)}},U_{\mathcal{F}^{(m-1)}}|Y_{1}^{N}\right)-H\left(U_{\mathcal{A}^{(m-1)}},U_{\mathcal{F}^{(m-1)}}|Y_{1}^{N}\right)\nonumber\\
	&\!=\!H\left(U_{1}^{m-1}|Y_{1}^{N}\right)+H\left(U_{m}|Y_{1}^{N},U_{1}^{m-1}\right)\!-\!H\left(U_{1}^{m-1}|Y_{1}^{N}\right)\nonumber\\
	&\!=\!H(U_{m}|Y_{1}^{N},U_1^{m-1})
\end{align}
which is exactly what one would expect from the naive analysis given by \eqref{eq:conditional_entropy}.

If $U_{m}$ is a frozen bit, then consider a model where it is not known to the receiver at the time of transmission.\footnote{This reflects how SCL decod\ed{ing} operates, i.e., it does not use the knowledge of any frozen bit $U_{m}$ until reaching the end of its decoding stage $m$. The soft estimate $p_{m}(u_{1}^{m-1})$ provides an additional information to separate the hypotheses (i.e, paths) although the hard estimate is chosen as $\hat{u}_{m}=u_{m}$ independent of $p_{m}(u_{1}^{m-1})$.} The act of revealing $U_{m}$ to the receiver changes the conditional uncertainty about $U_{\mathcal{A}^{(m-1)}}$ by
\begin{align}\label{eq:one_minus_entropy}
	&\Delta_{m}=H\left(U_{\mathcal{A}^{(m)}}|Y_{1}^{N},U_{\mathcal{F}^{(m)}}\right)-H\left(U_{\mathcal{A}^{(m-1)}}|Y_{1}^{N},U_{\mathcal{F}^{(m-1)}}\right) \nonumber\\
	&=\!H\!\left(U_{\mathcal{A}^{(m-1)}}|Y_{1}^{N},U_{\mathcal{F}^{(m-1)}},U_{m}\!\right)\!-\!H\!\left(U_{\mathcal{A}^{(m-1)}}|Y_{1}^{N},U_{\mathcal{F}^{(m-1)}}\!\right) \nonumber\\
	&=-I\left(U_{m};U_{\mathcal{A}^{(m-1)}}|Y_{1}^{N},U_{\mathcal{F}^{(m-1)}}\right) \nonumber\\
	&=H\left(U_{m}|Y_{1}^{N},U_{1}^{m-1}\right)-H\left(U_{m}|Y_{1}^{N},U_{\mathcal{F}^{(m-1)}}\right) \nonumber\\
	&\geq H\left(U_{m}|Y_{1}^{N},U_{1}^{m-1}\right)-1.
\end{align}
This expression quantifies the effect of revealing the new frozen bit as a reduction in the conditional entropy of the information bits preceding it. A large reduction may occur when the channel $W_{N}^{(m)}$ has low entropy (i.e., a low-entropy effective channel is essentially frozen) and the reduction will be small if the channel entropy is high (i.e., the input is unpredictable from $Y_1^N$ and $U_1^{m-1}$).

For BMS channels, we can combine \eqref{eq:info_bit_analysis} and \eqref{eq:one_minus_entropy} to understand the dynamics of \ed{$\Bar{D}_m$}. This gives a proxy for the uncertainty in SCL decoding after $m$ steps. We have
\begin{subequations}
	\begin{align}
	\sum_{i\in\mathcal{A}^{(m)}}H\left(W_{N}^{(i)}\right)-\sum_{i\in\mathcal{F}^{(m)}}&\left(1-H\left(W_{N}^{(i)}\right)\right)\leq \ed{\Bar{D}_m} \label{eq:bounds_LB}\\
	&\quad\leq\sum_{i\in\mathcal{A}^{(m)}}H\left(W_{N}^{(i)}\right).\label{eq:bounds_UB}
	\end{align}
\end{subequations}
The lower bound assumes that frozen bits (when perfectly observed) always reduce the entropy.

\begin{theorem}
	Upon observing $y_1^N$ when $u_1^N$ is transmitted, the set of partial sequences $\tilde{u}_1^m$ with a larger likelihood than some fraction of that for true sequence $u_1^m$ after $m$ stages of SCL decoding is given by \ed{\begin{equation}
	    \mathcal{S}_\alpha^{(m)}\left(u_1^m,y_1^N\right)\triangleq\{\tilde{u}_1^m:Q_m(\tilde{u}_1^m)\geq \alpha Q_m(u_1^m)\}
	\end{equation}
	where the fraction is given by a positive number $\alpha\leq1$}. On average, the logarithm of its cardinality is upper bounded by
	\begin{align}
		\E\left[\log_{2}|\mathcal{S}_\alpha^{(m)}|\right]&\leq \ed{\Bar{D}_m} +\log_{2}\alpha^{-1}\label{eq:bound2inc} \\
		& = H\left(U_{\mathcal{A}^{(m)}}|Y_{1}^{N},U_{\mathcal{F}^{(m)}}\right)+\log_{2}\alpha^{-1}\label{eq:bound2}.
	\end{align}
\end{theorem}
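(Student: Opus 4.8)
The plan is to reduce the statement to an elementary counting inequality for probability mass functions and then to recognize the resulting information quantity as the conditional entropy $d_m$. The starting point I would use is an operational observation about the decoder: at every frozen stage the SCL decoder does not create a new branch but extends all surviving paths with the correct frozen value, so every partial sequence it retains agrees with $u_1^m$ on the frozen positions, i.e. $\tilde u_{\mathcal{F}^{(m)}}=u_{\mathcal{F}^{(m)}}$. Consequently the sequences collected in $\mathcal{S}_\alpha^{(m)}$ can differ from $u_1^m$ only in the information positions $\mathcal{A}^{(m)}$, and the analysis reduces to the conditional law of $U_{\mathcal{A}^{(m)}}$ given $(Y_1^N,U_{\mathcal{F}^{(m)}})$. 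This reduction is exactly what yields $d_m$ in place of the looser $H(U_1^m\mid Y_1^N)$ that a naive count over all of $U_1^m$ would give.

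Next I would establish the counting inequality: for any probability mass function $P$ on a finite set, any reference point $u$ with $P(u)>0$, and any $\alpha\in(0,1]$, the set $\{v:P(v)\ge\alpha P(u)\}$ has at most $1/(\alpha P(u))$ elements. This is immediate, since the elements of that set have total probability at most one while each has probability at least $\alpha P(u)$. To apply it, I would use that $Q_m(\tilde u_1^m)\propto\Pr(U_1^m=\tilde u_1^m,Y_1^N=y_1^N)$ with a constant depending only on $y_1^N$; because both $\tilde u_1^m$ and $u_1^m$ carry the same frozen part $u_{\mathcal{F}^{(m)}}$, the common factor $\Pr(U_{\mathcal{F}^{(m)}}=u_{\mathcal{F}^{(m)}},Y_1^N=y_1^N)$ cancels in the ratio $Q_m(\tilde u_1^m)/Q_m(u_1^m)$. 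Hence membership in $\mathcal{S}_\alpha^{(m)}$ is equivalent to
\[
\Pr\!\bigl(U_{\mathcal{A}^{(m)}}=\tilde u_{\mathcal{A}^{(m)}}\mid y_1^N,u_{\mathcal{F}^{(m)}}\bigr)\ge\alpha\,\Pr\!\bigl(U_{\mathcal{A}^{(m)}}=u_{\mathcal{A}^{(m)}}\mid y_1^N,u_{\mathcal{F}^{(m)}}\bigr),
\]
so the inequality applies to the pmf $P(\cdot)=\Pr(U_{\mathcal{A}^{(m)}}=\cdot\mid y_1^N,u_{\mathcal{F}^{(m)}})$ with reference $u_{\mathcal{A}^{(m)}}$.

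Taking $\log_2$ of the resulting cardinality bound gives, for each fixed $(u_1^m,y_1^N)$,
\[
\log_2\bigl|\mathcal{S}_\alpha^{(m)}\bigr|\le-\log_2\Pr\!\bigl(U_{\mathcal{A}^{(m)}}=u_{\mathcal{A}^{(m)}}\mid y_1^N,u_{\mathcal{F}^{(m)}}\bigr)+\log_2\alpha^{-1},
\]
and averaging over the joint law of $(U_1^N,Y_1^N)$ turns the first term into $H(U_{\mathcal{A}^{(m)}}\mid Y_1^N,U_{\mathcal{F}^{(m)}})=d_m$, which is exactly \eqref{eq:bound2inc}--\eqref{eq:bound2}. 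I expect the main obstacle to be conceptual rather than computational: one must justify cleanly that the frozen positions may be pinned to their true values and thus moved into the conditioning, rather than being treated as free symbols contributing an extra $H(U_{\mathcal{F}^{(m)}}\mid Y_1^N)$ bits to the count. Getting this step right is what separates the sharp bound $d_m$ from the weaker $H(U_1^m\mid Y_1^N)$, and it is the point where the operational description of the SCL schedule---correct frozen bits are forced, never branched---must be invoked.
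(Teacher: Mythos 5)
Your proposal is correct and follows essentially the same route as the paper's own proof: both reduce membership in $\mathcal{S}_\alpha^{(m)}$ (via Bayes' rule and the fact that SCL forces the correct frozen values, so the frozen part pins to $u_{\mathcal{F}^{(m)}}$ and cancels) to a threshold condition on the conditional pmf $p(\cdot\,|\,y_1^N,u_{\mathcal{F}^{(m)}})$, apply the elementary counting bound $|\{v:P(v)\geq\alpha P(u)\}|\leq (\alpha P(u))^{-1}$, and average $-\log_2 p(u_{\mathcal{A}^{(m)}}|y_1^N,u_{\mathcal{F}^{(m)}})$ over the joint law to obtain $d_m+\log_2\alpha^{-1}$. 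Your write-up is, if anything, more explicit than the paper about the key step of moving the frozen bits into the conditioning, which the paper compresses into its step (a).
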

\begin{proof}
	Assume, w.l.o.g., that $u_1^N$ and $y_1^N$ are transmitted and observed, respectively. Then, we have
	\begin{align}
		\log_{2}|\mathcal{S}_\alpha^{(m)}| \!&\numeq{\text{a}} \!\log_{2}\sum_{\tilde{u}_1^m}\!\mathbbm{1}_{\left(p(\tilde{u}_{\mathcal{A}^{(m)}}|y_1^N,u_{\mathcal{F}^{(m)}}\!)\geq \alpha\cdot p(u_{\mathcal{A}^{(m)}}|y_1^N,u_{\mathcal{F}^{(m)}}\!)\!\right)} \\[-1.25em]
		&\numleq{\text{b}} -\log_{2}\alpha\cdot p\left(u_{\mathcal{A}^{(m)}}|y_1^N,u_{\mathcal{F}^{(m)}}\right)
	\end{align}
	where ($\text{a}$) follows from $Q_m(u_1^m)\!\propto\! \Pr\left(U_{1}^{m}=\tilde{u}_{1}^{m},Y_{1}^{N}=y_{1}^{N}\right)$ and Bayes' rule and ($\text{b}$) from the fact that if there are more than $\left(\alpha\cdot p\left(u_{\mathcal{A}^{(m)}}|y_1^N,u_{\mathcal{F}^{(m)}}\right)\right)^{-1}$ sequences $\tilde{u}_{\mathcal{A}}$ with probability $\alpha\cdot p\left(u_{\mathcal{A}^{(m)}}|y_1^N,u_{\mathcal{F}^{(m)}}\right)$, then the total probability exceeds $1$. As the inequality is valid for any pair $u_1^N$ and $y_1^N$, taking the expectation over all $u_1^m$ and $y_1^N$ yields the stated result.
\end{proof}
\ed{Note that $\alpha$ is used as a tuning parameter to catch near misses. Making it too small will keep many partial sequences with low probabilities in the set while choosing it as $\alpha=1$ will exclude the decoding paths with probabilities close to the correct one. For Monte-Carlo simulations validating \eqref{eq:bound2}, $\alpha$ is chosen close to $1$, but still keeping $\E\left[\log_{2}|\mathcal{S}_\alpha^{(m)}|\right]$ close to $\bar{D}_m$ especially for small values of $m$.}

Now, consider SCL decod\ed{ing} whose list size is $L_{m}$ during the $m$-th decoding step. Then, the decoder should satisfy $L_{m} \geq |\mathcal{S}_1^{(m)}|$ for the true $u_1^m$ to be in the set $\mathcal{S}_1^{(m)}$.
Using \eqref{eq:bound2} and \eqref{eq:bounds_UB} yields the simple upper bound
\begin{equation}\label{eq:bound3}
	\E\left[\log_{2}|\mathcal{S}_\alpha^{(m)}|\right]\leq\sum_{i\in\mathcal{A}^{(m)}}H\left(W_{N}^{(i)}\right)+\log_{2}\alpha^{-1}.
\end{equation}
\begin{remark}\label{remark:rare_events}
	The analysis in terms of $\log_{2}L_{m}$ has two weaknesses. First, the entropy $\ed{\Bar{D}_m}$ only characterizes typical events, e.g., ensuring that the correct codeword stays on the list at least half of the time, whereas coding typically focuses on rarer events, e.g., \own{\acp{BLER}} less than $10^{-2}$. Second, the sequence $\ed{\Bar{D}_m}$ is averaged over $Y_1^{N}$ but the actual decoder sees a random realization $d_m(y_1^N) = H\left(U_{\mathcal{A}^{(m)}}|Y_1^{N}=y_1^{N},U_{\mathcal{F}^{(m)}}\right)$. Nevertheless, we believe the results provide a useful step towards a theoretical analysis of SCL \ed{decoding}. The numerical results in Section \ref{sec:numerical} illustrate the accuracy of the analysis. To motivate the analysis further, we study the convergence properties of the \ac{r.v.} $D_m$ in Section \ref{sec:concentration}.
\end{remark}
\begin{remark}\label{remark:design}
	The results have significance for code design. To achieve good performance \ed{under} SCL \ed{decoding} whose list size is $L_{m}$ during the $m$-th decoding step, a reasonable first-order design criterion is that $\log_{2}L_{m}\geq \ed{\Bar{D}_m}$. This observation implies, in principle, that frozen bits should be allocated to prevent $\ed{\Bar{D}_m}$ from exceeding $\log_{2}L_{m}$. \ed{Since computing $\Bar{D}_m$ requires simulations with huge list sizes (if not unbounded) and the upper bound \eqref{eq:bounds_UB} ignores the affect of frozen bits, we use the lower bound \eqref{eq:bounds_LB} as the proxy for designs.} \ed{Motivated by the observations of Remark \ref{remark:rare_events}, we believe that a relatively low \ac{SNR} values should be chosen for the analysis in order to capture rare events. Using these guidelines, e}xemplary designs are provided in Section \ref{sec:numerical_awgn}.%\footnote{For RM decoding, the possibility of optimizing the values of $L_{m}$ was discussed in \cite{stolte2002rekursive}.} 
\end{remark}
\subsection{The Binary Erasure Channel}
\begin{proposition}\label{prop:subspace}
	On the BEC, the list of all valid partial \ed{input} sequences \ed{$u_{1}^m$} generated by SCL decoding with unbounded list size \ed{upon observing $y_1^N$} form an affine subspace\ed{, denoted as $\mathcal{S}^{(m)}\left(y_1^N\right)$}.
\end{proposition}
\begin{proof}
	Let $\mathcal{E}$ denote the set of erased positions in the realization $y_1^N$. Then, we can write
	\begin{equation}\label{eq:linear_system}
	(u_{\mathcal{A}^{(m)}},u_{m+1}^N)\G^\prime_{[N]\setminus\mathcal{F}^{(m)}}= y_{[N]\setminus\mathcal{E}}\oplus u_{\mathcal{F}^{(m)}}\G^\prime_{\mathcal{F}^{(m)}}
	\end{equation}
	where $\G^\prime_{\mathcal{S}}$ is the matrix formed by the rows of $\G_{N}$ indexed in $\mathcal{S}$ and then removing its columns indexed in $\mathcal{E}$ and $\oplus$ is the bit-wise XOR of two vectors. This equation enables to use the frozen bits $u_{\mathcal{F}^{(m)}}$ as side information. Let $\mathcal{C}$ denote the set of all possible solutions for $(u_{\mathcal{A}^{(m)}},u_{m+1}^N)$, which is an affine subspace. We are interested in all compatible partial information sequences $u_{\mathcal{A}^{(m)}}$ with \eqref{eq:linear_system} (hence, $u_1^m$ as $u_{\mathcal{F}^{(m)}}$ is a linear transform of $u_{\mathcal{A}^{(m)}}$). To this end, we define the mapping $\Pi_{\mathcal{A}^{(m)}}:\mathbb{F}_2^{N-|\mathcal{F}^{(m)}|}\rightarrow\mathbb{F}_2^{|\mathcal{A}^{(m)}|}$ as
	\begin{equation}\label{eq:linear_map}
	\Pi_{\mathcal{A}^{(m)}}(\mathcal{C})\triangleq\left\{v_1^{|\mathcal{A}^{(m)}|}:v_1^{N-|\mathcal{F}^{(m)}|}\in\mathcal{C}\right\}    
	\end{equation}
	which is a linear mapping since it can be represented as a multiplication of the input by a matrix formed by stacking an $|\mathcal{A}^{(m)}|\times |\mathcal{A}^{(m)}|$ identity matrix and an $(N-m)\times |\mathcal{A}^{(m)}|$ all-zero matrix. Then, the result follows by noting that a linear transform of an affine subspace is affine.
\end{proof}
\ed{Let $L_m\left(y_1^N\right)$ denote the list length after $m$-th decoding stage \edd{of SCL decoding with unbounded list size}, i.e., $L_m\left(y_1^N\right) =\left|\mathcal{S}^{(m)}\left(y_1^N\right)\right|$. Since each member of the list is equally likely, the proposition \edd{provides an} immediate \edd{corollary, which follows from the definition \eqref{def:cond_entropy} and observing that a subspace dimension is always a non-negative integer. The corollary bridges between the introduced quantity $d_m\left(y_1^N\right)$ and the list length $L_m\left(y_1^N\right)$ SCL decoding with unbounded list size explicitly.}
\begin{corollary}\label{cor:list_length_vs_sub_dim}
	At any decoding stage $m$, the list length $L_m\left(y_1^N\right)$, $y_1^N\in\{0,?,1\}^N$, of SCL decoding satisfies
	\begin{equation}
		\log_2L_{m}\left(y_1^N\right) = d_m\left(y_1^N\right). \label{eq:list_size_vs_sub_dim}
	\end{equation}
	Hence, \edd{$L_{m}\left(y_1^N\right)$} is a non-negative \edd{integer} power of $2$.
\end{corollary}
\edd{Next, another corollary is provided, which follows from the fact that $L_{m}\left(y_1^N\right)$ is a non-negative \edd{integer} power of $2$ in combination with the error events of SCL decoding with a fixed list size $L$ defined in Section \ref{sec:SCL_decoding}, i.e., either $\left|\mathcal{S}^{(m)}\left(y_1^N\right)\right| > L$ for any $m\in[N]$ or $\left|\mathcal{S}^{(N)}\left(y_1^N\right)\right| \neq 1$.}
\begin{corollary}
	SCL decoding with the list size $L$ performs the same as SCL decoding with $L^\prime = 2^{\lfloor\log_2 L\rfloor}$.
\end{corollary}}

\ed{The question of ``how large should the list size be for ML decoding?'' was partially addressed by \cite[Thm. 1]{HMH18} for the case of the BEC by providing an upper bound. We improve this bound with the following \edd{lemma}.
	\begin{lemma}\label{theorem:SCL_list_size_BEC}
		Let $L^*(\code)$ be the smallest list size for SCL decoding that implements ML decod\ed{ing} for an $(N,K)$ binary linear code $\code$. Let $\zeta$ and $\gamma$ denote the index of the last frozen bit before the first information bit, and the last (dynamic) frozen bit when $\code$ is represented as a polar code (with dynamic frozen bits), i.e.,
		\begin{equation*}
		    \zeta \triangleq \min\mathcal{A}\quad\text{and}\quad \gamma \triangleq \max\mathcal{F}.
		\end{equation*}
		Then $L^*(\code)$ is upper bounded as
		\begin{equation}\label{eq:SCL_list_size_BEC}
		    L^*(\code)\leq \min\left\{2^{N(1-R)-(\zeta-1)},2^{\gamma-N(1-R)}\right\}
		\end{equation}
		\edd{where $R$ is the code rate.}
	\end{lemma}}
\begin{proof}
	\ed{Observe that \edd{the first $\zeta-1$ frozen bits are of no use for pruning paths. Hence, there are $N(1-R)-(\zeta-1)$} frozen bits, which have the potential to prune half of the existing paths. If there are more than $2^{N(1-R)-(\zeta-1)}$ paths at any decoding stage during SCL decoding, then there will be at least $2$ solutions at the end.}
	
	\edd{We refer the reader to \cite[Thm. 1]{HMH18}, which states that $L^*(\code)\leq 2^{\gamma-N(1-R)}$, concluding the proof.}
\end{proof}
\ed{Note that \cite[Thm. 1]{HMH18} states that $L^*(\code)\leq 2^{\gamma-N(1-R)}$, which is usually relevant for low rate codes as $N(1-R)$ is large. Depending on the allocation of the frozen bit indices, \eqref{eq:SCL_list_size_BEC} can improve the previous result significantly. Consider, for instance, \ac{RM}$(5,7)$ with parameters $N=128$ and $K=120$ where $\zeta=4$ and $\gamma=65$. The previous result states that $L^*\leq 2^{57}$ while \eqref{eq:SCL_list_size_BEC} gives $L^*\leq 2^{5}$. Hence, \edd{Lemma} \ref{theorem:SCL_list_size_BEC} usually tightens the bound for high-rate codes. However, even this bound is far from being practical, especially for codes with rates $R\approx 0.5$. Consider, for instance, \ac{RM}$(3,7)$, where \eqref{eq:SCL_list_size_BEC} gives $L^*\leq 2^{49}$. In addition, the bounds are obviously independent of the channel quality since they are interested in exact ML decoding. If $\epsilon$ is very low then one requires much shorter lists on average to decode successfully.}

\ed{Recall that SCL decoding branches out the paths not for each information bit but whenever necessary. Instead, information bits are inactivated whenever necessary in the case of \ac{SCI} decoding. Therefore, we study the dynamics of \ac{SCI} decoding with consolidations without any constraints on the subspace dimension, which is equivalent to \ac{SCL} decoding with unbounded list size. This relaxation gives more understanding on the complexity vs. performance trade-offs on average. Its practical relevance stems from the complexity-adaptive nature of the decoders for the case of the BEC. Recalling Corollary \ref{cor:list_length_vs_sub_dim}, observe that \ac{SCI} decoding} provides a concrete example of the information-theoretic perspective \ed{since, f}or any $y_1^{N}$, the subspace dimension is $d_{m}(y_1^{N})=H\left(U_{\mathcal{A}^{(m)}}|Y_1^{N}=y_1^{N},U_{\mathcal{F}^{(m)}}\right)$ and \ac{SCI} decoding stores a basis for this subspace instead of listing all possible sequences in it.

Let $\epsilon^{(m)}_{N}\triangleq\Pr\left(p_{\own{m}}(u_{1}^{\own{m}-1})=\nicefrac{1}{2}\right)$, where the implied randomness is due to the received vector. Consider the decoding of information and frozen bits given the observed vector and preceding frozen bits. When an information bit $u_{m}$ is decoded, one of following events occurs:
\begin{itemize}
	\item The information bit is decoded as an erasure and the subspace dimension increases by one, i.e., $d_{m}(y_1^{N})=d_{m-1}(y_1^{N})+1$. Averaged over all $y_1^{N}$, the probability of this event equals $\epsilon_{N}^{(m)}$\cite{CNP20}.
	\item The information bit is decoded as an affine function of the previous information bits and the subspace dimension is unchanged, i.e., $d_{m}(y_1^{N})=d_{m-1}(y_1^{N})$. Averaged over all $y_1^{N}$, the probability of this event equals $1-\epsilon_{N}^{(m)}$\cite{CNP20}.
\end{itemize}

If a frozen $u_m$ is decoded, one of following events occurs:
\begin{itemize}
	\item The decoder returns an erasure for the frozen bit. In this case, revealing the true value of the frozen bit allows decoding to continue, but no new information is provided about preceding information bits. Thus, we have $d_{m}(y_1^{N}) = d_{m-1}(y_1^{N})$. Averaged over all $y_1^{N}$, the probability of this event equals \ed{$\epsilon_{N}^{(m)}$}\cite{CNP20}.
	\item The frozen bit is decoded as an affine function of the previous information bits. Averaged over all $y_1^{N}$, the probability of this event equals $1-\epsilon_{N}^{(m)}$\cite{CNP20}. In this case, revealing the true value of the frozen bit gives a linear equation for a subset of the preceding information bits. If the linear equation is informative, then the subspace dimension decreases by one via a consolidation event, i.e., we have $d_{m}(y_1^{N})=d_{m-1}(y_1^{N})-1$. Otherwise, the dimension is unchanged, i.e., $d_{m}(y_1^{N})=d_{m-1}(y_1^{N})$.
\end{itemize}

At first glance, these rules might appear to tell the whole story. But the erasure rate $\epsilon_{N}^{(m)}$ is averaged over all $y_1^{N}$ whereas predicting the value of $D_{m}$ requires the conditional probability of erasure events given all past observations. %(e.g., the sequence of channel erasures and subspace dimensions).
More importantly, to understand consolidation events, one needs to compute the probability that the obtained equation is informative.

Since we do not have expressions for these quantities,\footnote{Even if we had them exactly, they may be too complicated to be useful.} we use two simplifying approximations \ed{for a given random sequence $D_1,\ldots,D_{m-1}$}. First, we approximate the probability of decoding an erasure for a frozen bit as independent of all past events, i.e., for any $d_1^{m-1}\edd{\triangleq (d_1\left(y_1^N\right),\ldots,d_{m-1}\left(y_1^N\right))}$, we write
\begin{align}
    \Pr\left(\left.p_{m}(u_{1}^{m-1})=\nicefrac{1}{2}\right|D_1^{m-1}=d_1^{m-1}\right) &\approx \epsilon^{(m)}_{N}.
\end{align}
Second, we approximate the probability that an informative equation obtained from consolidation at decoding stage $m$ by $1-2^{-D_{m-1}}$, independent of sequence $D_1,\ldots,D_{m-2}$. This means, for $m\in\mathcal{F}$, we write
\begin{align}
    &\Pr\left(D_{m}=d_{m-1}\!\left|D_1^{m-1} \!=\! d_1^{m-1}, p_{\edd{m}}(u_{1}^{\edd{m}-1})\neq\nicefrac{1}{2}\right.\right) \!\approx\! 2^{-d_{m-1}} \\
    &\approx \!1\!-\!\Pr\left(D_{m}=d_{m-1}\!-\!1\left|D_1^{m-1} = d_1^{m-1}, p_{\edd{m}}(u_{1}^{\edd{m}-1})\neq\nicefrac{1}{2}\right.\right)
\end{align}
\edd{which} comes from modeling the obtained equation and the subset using a uniform random model. Under these assumptions, the random sequence $D_{1},\ldots,D_{N}$ can be modelled by an inhomogeneous Markov chain with transition probabilities $P_{i,j}^{(m)} \ed{\triangleq} \Pr\left(D_{m}=j\,|\:D_{m-1}=i\right)$ where
\begin{equation}\label{eq:Markov_app}
	P_{i,j}^{(m)}\ed{\approx}\begin{cases}
	\epsilon_{N}^{(m)} & \!\!\!\!\!\text{if }m\in\mathcal{\mathcal{A}},~j=i+1\\
	1-\epsilon_{N}^{(m)} & \!\!\!\!\!\text{if }m\in\mathcal{\mathcal{A}},~j=i\\
	\epsilon_{N}^{(m)}+\left(1-\epsilon_{N}^{(m)}\right)2^{-D_{m-1}} & \!\!\!\!\!\text{if }m\in\mathcal{\mathcal{F}},~j=i\\
	\left(1-\epsilon_{N}^{(m)}\right)\left(1-2^{-D_{m-1}}\right) & \!\!\!\!\!\text{if }m\in\mathcal{\mathcal{F}},~j=i-1.
	\end{cases}
\end{equation}

\ed{Consider now decoding of frozen bit $u_m$ based on this Markov chain approximation. We write
\begin{align}
	\bar{D}_m &\approx \E\left[D_{m-1} - \left(1-\epsilon_{N}^{(m)}\right)\left(1-2^{-D_{m-1}}\right)\right]\\
	 &\approx \bar{D}_{m-1} - \left(1-\epsilon_{N}^{(m)}\right)\left(1-2^{-\bar{D}_{m-1}}\right)\label{eq:frozen_rec}
\end{align}
where the last line follows from approximating $\mathbb{E}\left[2^{-D_{m}}\right]$ as $2^{-\bar{D}_m}$. In the case of information bit $u_m$, we have
\begin{align}
	\bar{D}_m &\approx \E\left[\epsilon_{N}^{(m)}\left(D_{m-1}+1\right) + \left(1-\epsilon_{N}^{(m)}\right)D_{m-1}\right]\\
	&= \bar{D}_{m-1}+\epsilon_{N}^{(m)}.\label{eq:info_rec}
\end{align}
By setting $\bar{D}_0\triangleq 0$, \eqref{eq:frozen_rec} and \eqref{eq:info_rec} give the simple recursive approximation
\begin{equation}\label{eq:recursive_app}
	\bar{D}_m\approx\begin{cases}
	\bar{D}_{m-1}+\epsilon_{N}^{(m)} & \!\!\!\!\text{if }m\in\mathcal{A}\\
	\bar{D}_{m-1}-\left(1\!-\!\epsilon_{N}^{(m)}\right)\!\left(\!1-\!2^{-\bar{D}_{m-1}}\right) & \!\!\!\!\text{if }m\in\mathcal{F}.
	\end{cases}
\end{equation}}
	\section{Concentration for List Decoders}\label{sec:concentration}
This section studies the stochastic convergence properties of the \ac{r.v.} $D_m$. In particular, we show that the required uncertainty\ed{, quantified by $D_m$,} accumulated by SCL \ed{decoding} to keep the correct path on the list concentrates around its mean $\ed{\bar{D}_m}$ for sufficiently large block lengths. 
\subsection{General Approach}
We form a Doob's Martingale by sequentially revealing information about the object of interest (e.g., see \cite{LMSS98,RU01}), which is the conditional entropy in our case. In $N$ consecutive steps, we reveal the random channel realizations. Irrespective of the revealed realization, the change in the \ed{conditional entropy} is bounded by some constant. This lets us use the Azuma-Hoeffding inequality\cite[Thm.~12.6]{Mitzenmacher05} since the channels under consideration are memoryless.
\begin{proposition}
	The sequence of \acp{r.v.} $H_0^{(m)}, H_1^{(m)},\dots,H_N^{(m)}$ where $H_i^{(m)} \triangleq \E\left[D_m|Y_1^i\right]$ is a Doob's Martingale, i.e.,
	\begin{align}
		&H_i^{(m)}\text{is a function of }Y_1^i\label{eq:mart_1}\\
		&\E\left[|D_m|\right] < \infty\label{eq:mart_2}\\
		&H_{i-1}^{(m)} = \E\left[H_{i}^{(m)}|Y_1^{i-1}\right].\label{eq:mart_3}
	\end{align}
\end{proposition}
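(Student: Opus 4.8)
The plan is to verify directly the three defining properties \eqref{eq:mart_1}--\eqref{eq:mart_3} of a Doob's martingale, treating $D_m = d_m(Y_1^N)$ as a fixed bounded random variable and regarding $Y_1^i$ as the information revealed after $i$ steps. The construction is the canonical one for a Doob martingale: I build the sequence by sequentially exposing the channel outputs $Y_1,Y_2,\dots,Y_N$, so that $H_i^{(m)}$ is the conditional mean of the target quantity $D_m$ given the first $i$ outputs. Property \eqref{eq:mart_1} is immediate: since $H_i^{(m)} = \E\left[D_m | Y_1^i\right]$ is by definition a conditional expectation given the $\sigma$-algebra generated by $Y_1^i$, it is measurable with respect to that $\sigma$-algebra and can therefore be written as a deterministic function of $Y_1^i$.

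For the integrability condition \eqref{eq:mart_2} I would argue via boundedness. Here $D_m = H\left(U_{\mathcal{A}^{(m)}}|Y_1^N=y_1^N,U_{\mathcal{F}^{(m)}}\right)$ is the conditional entropy of the $|\mathcal{A}^{(m)}|$ binary information bits indexed by $\mathcal{A}^{(m)}$, evaluated at the realization $y_1^N$. Since the entropy of $|\mathcal{A}^{(m)}|$ binary random variables never exceeds $|\mathcal{A}^{(m)}|\leq m\leq N$ bits, we have $0\leq D_m\leq N$ surely; on the BEC this is simply the observation from Proposition \ref{prop:subspace} that a subspace of $\mathbb{F}_2^{|\mathcal{A}^{(m)}|}$ has dimension at most $|\mathcal{A}^{(m)}|$. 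Consequently $\E\left[|D_m|\right]\leq N<\infty$, which settles \eqref{eq:mart_2} with no further work.

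The martingale property \eqref{eq:mart_3} then follows from the tower rule for nested conditioning. Because the channel outputs are revealed one at a time, the information available after $i-1$ steps is a coarsening of that available after $i$ steps, i.e., $\sigma(Y_1^{i-1})\subseteq\sigma(Y_1^i)$, and the law of iterated expectations gives
\begin{equation}
\E\left[H_i^{(m)} | Y_1^{i-1}\right] = \E\left[\E\left[D_m | Y_1^i\right] | Y_1^{i-1}\right] = \E\left[D_m | Y_1^{i-1}\right] = H_{i-1}^{(m)},
\end{equation}
which is precisely \eqref{eq:mart_3}. I do not expect any genuine obstacle in this proposition: each step is a direct consequence of the definition of conditional expectation and the uniform bound $0\leq D_m\leq N$. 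The one point I would state explicitly is this boundedness, because it is exactly what makes the martingale useful downstream; the actual difficulty lies not here but in the next step, where one must bound the increments $\left|H_i^{(m)}-H_{i-1}^{(m)}\right|$ by a constant (using memorylessness and the fact that revealing a single $Y_i$ can change the subspace dimension by at most a fixed amount) so that the Azuma--Hoeffding inequality can be invoked for the concentration result.
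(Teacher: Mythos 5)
Your proof is correct and follows essentially the same route as the paper's: measurability of the conditional expectation for \eqref{eq:mart_1}, integrability of $D_m$ for \eqref{eq:mart_2}, and the tower property for \eqref{eq:mart_3}. The only (harmless) difference is that you establish \eqref{eq:mart_2} via the uniform bound $0\leq D_m\leq N$, whereas the paper cites non-negativity together with $\E\left[D_m\right] = H\left(U_{\mathcal{A}^{(m)}}|Y_1^{N},U_{\mathcal{F}^{(m)}}\right)$ being finite.
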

\begin{proof}
	The statement \eqref{eq:mart_1} follows from the construction of \acp{r.v.} $H_i^{(m)}$ and the definition of conditional expectation. The inequality \eqref{eq:mart_2} follows from the non-negativity of $D_m$ and $\E\left[D_m\right] = H\left(U_{\mathcal{A}^{(m)}}|Y_1^{N},U_{\mathcal{F}^{(m)}}\right)$. Finally, \eqref{eq:mart_3} follows by
	\begin{align}
		\E\left[H_{i}^{(m)}|Y_1^{i-1}\right] &= \E\left[\E\left[D_{m}|Y_1^{i}\right]|Y_1^{i-1}\right] \label{eq:tower1} \\
		&= \E\left[D_{m}|Y_1^{i-1}\right] \label{eq:tower2} \\
		&= H_{i-1}^{(m)} \label{eq:tower3}
	\end{align}
	where \eqref{eq:tower1} and \eqref{eq:tower3} follow from the definition of $H_i^{(m)}$, and \eqref{eq:tower2} from the tower property~\cite[Eq. (C.13)]{Richardson:2008:MCT:1795974}.
\end{proof}
\begin{proposition}\label{prop:lipschitz_bms}
	Consider transmission over a \ed{discrete output} BMS channel satisfying $W(y|x)\geq\delta>0, \forall y\in\mathbb{\mathcal{Y}}$, $\forall x\in\{0,1\}$. Then, for all $i\in[N]$ and all values $y_1^N$ and $\tilde{y}_1^N$ such that $y_{\sim i} = \tilde{y}_{\sim i}$ and $y_i \neq \tilde{y}_i$, the conditional entropy \ed{$d_{m}\left(y_1^N\right) = H\left(U_{\mathcal{A}^{(m)}}|Y_{1}^{N}=y_1^N,U_{\mathcal{F}^{(m)}}\right)$} satisfies
	\begin{equation}
		\left|d_m(y_1^{N})-d_m(\tilde{y}_1^N)\right|\leq 4\left|\log_{2}\delta\right|.
		\label{eq:lipschitz_bms}
	\end{equation}
\end{proposition}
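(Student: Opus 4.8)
The plan is to sidestep any direct perturbation estimate on the entropy and instead reveal the $i$-th coded bit as an auxiliary observation. Write $X_i$ for the $i$-th entry of $U_1^N\G_N$ and set $d_m^+(y_1^N)\triangleq H\!\left(U_{\mathcal{A}^{(m)}}\mid Y_1^N=y_1^N,U_{\mathcal{F}^{(m)}},X_i\right)$. The argument rests on two facts: (A) adjoining $X_i$ to the conditioning perturbs $d_m$ by at most one bit, uniformly in $y_1^N$; and (B) once $X_i$ is in the conditioning, $d_m^+$ no longer depends on $y_i$. Granting both, the triangle inequality gives $\left|d_m(y_1^N)-d_m(\tilde y_1^N)\right|\le\left|d_m(y_1^N)-d_m^+(y_1^N)\right|+\left|d_m^+(\tilde y_1^N)-d_m(\tilde y_1^N)\right|\le 2$, which is at least as strong as \eqref{eq:lipschitz_bms} (indeed $4\left|\log_2\delta\right|\ge 4>2$ whenever the channel has two or more outputs, since then $\delta\le 1/2$).

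Fact (A) is immediate: for every fixed $y_1^N$ we have $d_m(y_1^N)-d_m^+(y_1^N)=I\!\left(U_{\mathcal{A}^{(m)}};X_i\mid Y_1^N=y_1^N,U_{\mathcal{F}^{(m)}}\right)$, a conditional mutual information under the posterior given $Y_1^N=y_1^N$, hence nonnegative and bounded by $H(X_i\mid\cdots)\le\log_2 2=1$ because $X_i$ is binary.

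Fact (B) is the crux and splits in two. First, memorylessness gives the Markov chain $Y_i - X_i - (U_1^N,Y_{\sim i})$, so conditioned on $X_i=s$ the symbol $Y_i$ is useless: both the inner entropies $H\!\left(U_{\mathcal{A}^{(m)}}\mid Y_{\sim i}=y_{\sim i},U_{\mathcal{F}^{(m)}},X_i=s\right)$ and the posterior weights of $U_{\mathcal{F}^{(m)}}$ given $X_i=s$ are free of $y_i$, as the common factor $W(y_i\mid s)$ cancels in every normalization. Thus $d_m^+(y_1^N)=\sum_{s\in\{0,1\}}\Pr\!\left(X_i=s\mid Y_1^N=y_1^N\right)\eta(s)$ with each $\eta(s)$ independent of $y_i$. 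Second, I claim $\eta(0)=\eta(1)$, so the $y_i$-dependent weights drop out entirely. Let $r_i$ be the unique input pattern with $r_i\G_N=e_i$ (it exists since $\G_N$ is invertible over $\mathbb{F}_2$) and consider $\tau(u)=u\oplus r_i$, which toggles $X_i$ while fixing every other coded bit. Because $Y_i$ is not observed inside $\eta(s)$, the governing posterior is proportional to $\prod_{j\ne i}W(y_j\mid x_j(u))$, which $\tau$ preserves exactly; hence $\tau$ is a measure-preserving involution swapping $\{X_i=0\}$ and $\{X_i=1\}$. On the inputs it acts as a coordinate-wise XOR by the constant $r_i$, and XOR by a constant preserves conditional entropy, so $H\!\left(U_{\mathcal{A}^{(m)}}\mid U_{\mathcal{F}^{(m)}},X_i=0\right)=H\!\left(U_{\mathcal{A}^{(m)}}\mid U_{\mathcal{F}^{(m)}},X_i=1\right)$, i.e. $\eta(0)=\eta(1)$. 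Therefore $d_m^+(y_1^N)=\eta(0)$ is independent of $y_i$.

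The main obstacle is exactly the equality $\eta(0)=\eta(1)$. The tempting route is to bound the change in $d_m$ through the pointwise change of the posterior of $U_1^N$, which is controlled by $2\left|\log_2\delta\right|$ bits; but entropy is not Lipschitz in pointwise log-likelihood ratios (a law spread over many near-uniform atoms can have its entropy moved far more than the log-ratio bound when those atoms are reweighted coherently), so that approach does not close—and this is presumably why a $\delta$-dependent constant appears in the statement. What rescues the argument is the linear-bijective structure of $\G_N$: complementing one coded coordinate is realized by an input translation, and translations preserve entropy. The only bookkeeping I would double-check is that $\tau$ sends the event $\{X_i=1,U_{\mathcal{F}^{(m)}}=f\}$ to $\{X_i=0,U_{\mathcal{F}^{(m)}}=f\oplus(r_i)_{\mathcal{F}^{(m)}}\}$ and that this reindexing of $U_{\mathcal{F}^{(m)}}$ leaves the $U_{\mathcal{F}^{(m)}}$-averaged conditional entropy unchanged.
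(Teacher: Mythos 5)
Your proof is correct, and it takes a genuinely different route from the paper's. The paper argues by perturbation of the posterior: via Bayes' rule, total probability, and the Markov chain $Y_i - X_i - (U_1^m,X_{\sim i},Y_{\sim i})$, it establishes the pointwise bound $\bigl|\log_2\frac{p(u_{\mathcal{A}^{(m)}}|y_1^N,u_{\mathcal{F}^{(m)}})}{p(u_{\mathcal{A}^{(m)}}|\tilde y_1^N,u_{\mathcal{F}^{(m)}})}\bigr|\le 4\max_y|\log_2 W(y|0)|\le 4|\log_2\delta|$ and then converts this log-likelihood-ratio bound into the entropy bound by averaging over $u_1^m$. You instead reveal the coded bit $X_i$: conditioning on one extra binary variable moves the entropy by at most one bit, and once $X_i$ is in the conditioning, memorylessness plus the translation $u\mapsto u\oplus r_i$ with $r_i\G_N=e_i$ (which exists since $\G_N$ is invertible; indeed $\G_N^{-1}=\G_N$, so $r_i$ is the $i$-th row of $\G_N$) makes $H(U_{\mathcal{A}^{(m)}}|Y_1^N=y_1^N,U_{\mathcal{F}^{(m)}},X_i)$ independent of $y_i$. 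Your final bookkeeping concern checks out: $\tau$ preserves the uniform prior and the likelihood $\prod_{j\neq i}W(y_j|x_j)$, swaps the events $\{X_i=0\}$ and $\{X_i=1\}$, and relabels $(U_{\mathcal{A}^{(m)}},U_{\mathcal{F}^{(m)}})$ by fixed XOR constants, which leaves the conditional entropy unchanged. What your route buys is substantial: a Lipschitz constant of $2$ in place of $4|\log_2\delta|$, no hypothesis $\delta>0$ at all (so a single proof covers the BEC, which the paper must treat separately via the rank argument of Proposition~\ref{prop:lipschitz}, and also continuous-output channels such as the biAWGNC, which the paper explicitly cannot handle), and a correspondingly stronger, $\delta$-free exponent $2\exp(-\beta^2 N/8)$ in the Azuma step of Corollary~\ref{cor:conc_bms}.

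One further point deserves emphasis: the ``tempting route'' you reject in your last paragraph is precisely the route the paper takes, and your objection to it is sound. The quantities $d_m(y_1^N)$ and $d_m(\tilde y_1^N)$ are expectations of $-\log_2 p(\cdot)$ under two \emph{different} posteriors; a pointwise log-ratio bound of $c$ bits yields only a multiplicative estimate of the form $H(q)\le c+2^{c}H(p)$, not $|H(p)-H(q)|\le c$. Your ``coherently reweighted atoms'' scenario is a genuine counterexample to the general implication: with $p=(1-\epsilon,\epsilon/M,\ldots,\epsilon/M)$ and $q=(1-2\epsilon,2\epsilon/M,\ldots,2\epsilon/M)$ all ratios are within one bit, yet $H(q)-H(p)\approx\epsilon\log_2 M$ grows without bound. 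Hence the paper's closing step (``averaging over all $u_1^m$, combined with the Jensen's inequality'') does not close as stated; some structural ingredient beyond the ratio bound is required, and the translation symmetry of $\G_N$ that you exploit is exactly such an ingredient. Your argument is therefore not merely an alternative: it repairs a gap in the paper's proof and establishes the stronger, channel-independent bound $|d_m(y_1^N)-d_m(\tilde y_1^N)|\le 2$.
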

\begin{proof}
	In the following, the \acp{r.v.} are not explicitly written in the probability assignments, e.g., the probabilities are denoted as $p\left(u_1^m,x_1^N,y_1^N\right) = \Pr\left(U_1^m = u_1^m, X_1^N = x_1^N,Y_1^N = y_1^N\right)$. %Then, we write $p_{\sim Y_i}$ for the marginal of $p\left(u_1^m,x_1^N,y_1^N\right)$ with respect to the \ac{r.v.} $Y_i$, i.e., $p_{\sim Y_i} = p\left(u_1^m,x_1^N,y_{\sim i}\right)$. Finally, we write $W_i(y) = W(y|x_i)$.
	
	The proof starts by writing
	\begin{align}
	&\frac{p\left(u_{\mathcal{A}^{(m)}}|y_1^N,u_{\mathcal{F}^{(m)}}\right)}{p\left(u_{\mathcal{A}^{(m)}}|\tilde{y}_1^N,u_{\mathcal{F}^{(m)}}\right)} \numeq{\text{a}} \frac{p\left(u_1^m,y_1^N\right)}{p\left(y_1^N,u_{\mathcal{F}^{(m)}}\right)}\cdot\frac{p\left(\tilde{y}_1^N,u_{\mathcal{F}^{(m)}}\right)}{p\left(u_1^m,\tilde{y}_1^N\right)}\label{eq:bms_dim1}\\
	&\numeq{\text{b}} \frac{\sum_{x_1^N}p\left(u_1^m,x_1^N,y_1^N\right)}{\sum_{x_1^N}p\left(y_1^N,x_1^N,u_{\mathcal{F}^{(m)}}\right)}\cdot\frac{\sum_{x_1^N}p\left(\tilde{y}_1^N,x_1^N,u_{\mathcal{F}^{(m)}}\right)}{\sum_{x_1^N}p\left(u_1^m,x_1^N,\tilde{y}_1^N\right)}\label{eq:bms_dim2}\\
	&\numeq{\text{c}} \frac{\sum_{x_i}W(y_i|x_i)\sum_{x_{\sim i}}p\left(u_1^m,x_1^N,y_{\sim i}\right)}{\sum_{x_i}W(\tilde{y}_i|x_i)\sum_{x_{\sim i}}p\left(u_1^m,x_1^N,y_{\sim i}\right)} \\ &\qquad\qquad\quad\cdot\frac{\sum_{x_i}W(\tilde{y}_i|x_i)\sum_{x_{\sim i}}p(y_{\sim i}, x_1^N, u_{\mathcal{F^{\text{$(m)$}}}})}{\sum_{x_i}W(y_i|x_i)\sum_{x_{\sim i}}p(y_{\sim i}, x_1^N, u_{\mathcal{F^{\text{$(m)$}}}})}\label{eq:bms_dim4}\\
	&\numeq{\text{d}} \frac{\sum_{x_i}W(y_i|x_i)p(u_1^m,x_i,y_{\sim i})}{\sum_{x_i}W(\tilde{y}_i|x_i)p(u_1^m,x_i,y_{\sim i})} \\ &\qquad\qquad\quad\cdot\frac{\sum_{x_i}W(\tilde{y}_i|x_i)p\left(y_{\sim i},x_i,u_{\mathcal{F}^{(m)}}\right)}{\sum_{x_i}W(y_i|x_i)p\left(y_{\sim i},x_i,u_{\mathcal{F}^{(m)}}\right)}\label{eq:bms_dim5}
	\end{align}
	where ($\text{a}$) follows from Bayes' rule, ($\text{b}$) and ($\text{d}$) from the law of total probability, and ($\text{c}$) from rearranging the summation, Bayes' rule and noting that \ed{$Y_i$ and $(U_1^m,X_{\sim i},Y_{\sim i})$ are independent given $X_i$}. Then, we take the logarithm and absolute value of both sides in \eqref{eq:bms_dim5}. Combining the triangle inequality, i.e., $|a+b|\leq|a|+|b|$, with the fact that each summand is upper bounded by $\max_{y}\left|\log_{2}W(y|0)\right|$, e.g., 
	\begin{equation}
		\left|\log_{2}\sum_{x_i}W(y_i|x_i)p(u_1^m,x_i,y_{\sim i})\right|\leq\max_{y}\left|\log_{2}W(y|0)\right|
	\end{equation}
	we conclude that
	\begin{equation}
	\left|\log_{2}\frac{p\left(u_{\mathcal{A}^{(m)}}|y_1^N,u_{\mathcal{F}^{(m)}}\right)}{p\left(u_{\mathcal{A}^{(m)}}|\tilde{y}_1^N,u_{\mathcal{F}^{(m)}}\right)}\right|\leq 4\max_{y}\left|\log_{2}W(y|0)\right|.\label{eq:bms_dim6}
	\end{equation}
	Since \eqref{eq:bms_dim6} is valid for any $u_1^m$, averaging over all $u_1^m$, combined with the Jensen's inequality, leads to \eqref{eq:lipschitz_bms} by noting $W(y|0)\geq\delta, \forall y\in\mathbb{\mathcal{Y}}$.
\end{proof}

As a result of Proposition \ref{prop:lipschitz_bms}, the following corollary provides a concentration for the logarithm of the list size required to approach the performance of a code under ML decoding when the transmission is over discrete output BMS channels. More precisely, the normalized (with respect to the block length) deviation of the logarithm of the random list size, required to keep the correct codeword in the list, from the average decays exponentially fast.
\begin{corollary}\label{cor:conc_bms}
		For transmission over a \ed{discrete output} BMS channel satisfying $W(y|x)\geq\delta>0, \forall y\in\mathbb{\mathcal{Y}}$, $\forall x\in\{0,1\}$, the \ac{r.v.} $D_m$, $m\in[N]$, concentrates around its mean \ed{$\bar{D}_m$} for sufficiently large block lengths, i.e., for any $\beta>0$, we have
		\begin{equation}\label{eq:concentration_bms}
			\Pr\left\{\frac{1}{N}|D_m-\ed{\bar{D}_m}|>\beta\right\}\leq 2\exp\left(-\frac{\beta^2}{\ed{32}\left|\log_{2}\delta\right|\ed{^2}}N\right).
		\end{equation}
	\end{corollary}
	\begin{proof}
		\ed{Since the channel under consideration is memoryless, $Y_i$, $i\in[N]$, are independent due to the uniform $U_1^N$. Hence, Proposition \ref{prop:lipschitz_bms} implies
		\begin{equation}
		    \left|H_i^{(m)}-H_{i-1}^{(m)}\right| \leq 4\left|\log_{2}\delta\right|,\,\, i\in[N].
		\end{equation}
		The Azuma-Hoeffding inequality\cite[Thm.~\ed{12.4}]{Mitzenmacher05} is applied then by observing that the first element in the martingale is the expectation of $D_m$ and the last one is the \ac{r.v.} itself, i.e., $H_0^{(m)} = \bar{D}_m$ and $H_N^{(m)} = D_m$}.
\end{proof}
Since $\delta=0$ for the BEC, it will be considered separately in the next section.
For the case where $W(y|0)$ is a continuous probability density function on a compact set $\mathcal{Y} \subset \mathbb{R}$, the same proof applies with $\delta = \min_{y \in \mathcal{Y}} W(y|0)$.
But, the proof does not extend to unbounded output alphabets. 

\subsection{The Binary Erasure Channel}
SCI decoding over the BEC is equivalent to solving a system of linear equations with side information depending on the decoding stage. In other words, the decoder has the knowledge of the frozen bits $u_{\mathcal{F}^{(m)}}$ after the decoding stage $m$ as side information.
\begin{proposition}\label{prop:lipschitz}
	For transmission over the BEC, the subspace dimension satisfies the Lipschitz-$1$ condition: for all $i\in[N]$ and all values $y_1^N$ and $\tilde{y}_1^N$ such that $y_{\sim i} = \tilde{y}_{\sim i}$ and $y_i \neq \tilde{y}_i$, the subspace dimension satisfies
	\begin{equation}
		|d_m(y_1^{N})-d_m(\tilde{y}_1^N)|\leq 1.
		\label{eq:lipschitz}
	\end{equation}
\end{proposition}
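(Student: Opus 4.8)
The plan is to reduce the claim to a rank computation using the affine-subspace characterization from Proposition \ref{prop:subspace}. Over the BEC, conditioning on $Y_1^N=y_1^N$ and $U_{\mathcal{F}^{(m)}}$ leaves $U_{\mathcal{A}^{(m)}}$ uniform over an affine subspace, so $d_m(y_1^N)$ equals the dimension of $\Pi_{\mathcal{A}^{(m)}}(\mathcal{C})$, where $\mathcal{C}$ is the solution set of \eqref{eq:linear_system}. Writing $\mathcal{C}=v_0+V$ with $V$ the linear solution space of $v\,\G^\prime_{[N]\setminus\mathcal{F}^{(m)}}=0$, the affine dimension equals $\dim V$ and $\dim\Pi_{\mathcal{A}^{(m)}}(\mathcal{C})=\dim\Pi_{\mathcal{A}^{(m)}}(V)$. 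Hence I would carry the whole argument out at the level of linear subspaces and their images under the fixed linear map $\Pi_{\mathcal{A}^{(m)}}$ of \eqref{eq:linear_map}.

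First I would observe that, over the BEC, $y_1^N$ enters \eqref{eq:linear_system} only through the erasure set $\mathcal{E}$ (which fixes the coefficient matrix) and through the right-hand side (the known values); the columns of the coefficient matrix are indexed by the non-erased positions $[N]\setminus\mathcal{E}$, one scalar equation per non-erased coordinate. Changing a single coordinate from $y_i$ to $\tilde y_i$ with $y_{\sim i}=\tilde y_{\sim i}$ then splits into two cases according to whether the erasure status of position $i$ changes. If it does not change — both coordinates non-erased with differing bit values — the coefficient matrix is identical and only the right-hand side differs; since the dimension of the solution space of a consistent linear system depends on the coefficient matrix alone, $d_m$ is unchanged and the bound holds with difference $0$. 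Consistency is guaranteed because the transmitted codeword is always a solution, and this is in fact the only regime relevant to the Doob martingale of Section \ref{sec:concentration}, where each revealed $Y_i$ lies in $\{x_i,?\}$.

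In the remaining case the erasure status flips, say $i\in\mathcal{E}$ for $y_1^N$ but $i\notin\mathcal{E}\setminus\{i\}$ for $\tilde y_1^N$. Un-erasing position $i$ appends exactly one column to the coefficient matrix, i.e.\ one scalar equation $v\cdot c_i=0$ for the corresponding column $c_i$ of $\G_N$ restricted to the rows $[N]\setminus\mathcal{F}^{(m)}$. The linear solution space therefore shrinks from $V$ to $\tilde V=\{v\in V:v\cdot c_i=0\}$, and a single homogeneous equation drops the dimension by at most one, so $\tilde V\subseteq V$ with $\dim V-\dim\tilde V\in\{0,1\}$.

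The final and most delicate step is to transfer this bound through $\Pi_{\mathcal{A}^{(m)}}$, whose kernel (vectors supported on the $u_{m+1}^N$ block) may meet $V$ nontrivially, so that in general $\dim\Pi_{\mathcal{A}^{(m)}}(V)<\dim V$. Here I would invoke the rank–nullity identity $\dim\Pi_{\mathcal{A}^{(m)}}(V)=\dim V-\dim(V\cap\ker\Pi_{\mathcal{A}^{(m)}})$ together with $\tilde V\subseteq V$, which forces $\tilde V\cap\ker\Pi_{\mathcal{A}^{(m)}}\subseteq V\cap\ker\Pi_{\mathcal{A}^{(m)}}$ and hence
\begin{equation}
0\leq \dim\Pi_{\mathcal{A}^{(m)}}(V)-\dim\Pi_{\mathcal{A}^{(m)}}(\tilde V)\leq \dim V-\dim\tilde V\leq 1.
\end{equation}
Combining the two cases yields $|d_m(y_1^N)-d_m(\tilde y_1^N)|\leq 1$. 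I expect the nested-kernel bookkeeping of this last step to be the only genuine obstacle, since the projection can in principle collapse the dimension further; the remainder is a direct reading of the linear system \eqref{eq:linear_system}.
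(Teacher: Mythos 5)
Your proof is correct and follows essentially the same route as the paper's: both reduce $d_m(y_1^N)$ to the dimension of the affine solution set of \eqref{eq:linear_system}, note that toggling the erasure status of a single coordinate adds or removes exactly one column (hence changes the rank, and thus the solution-space dimension, by at most one), and then pass to the projection onto the $\mathcal{A}^{(m)}$ coordinates. Your rank--nullity/nested-kernel step and your explicit treatment of the case where $y_i$ and $\tilde{y}_i$ are both non-erased merely flesh out two points the paper states without detail (it asserts that the number of projected vectors ``either increases by a factor of $2$ or does not change,'' and declares it sufficient to consider only the erased-versus-non-erased case), so they are a welcome tightening rather than a departure.
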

\begin{proof}
	It suffices to consider the case where $y_i$ is not erased, but $\tilde{y}_i$ is an erasure (i.e., $y_i=x_i$ and $\tilde{y}_i =\,?$). Recall the linear system given as \eqref{eq:linear_system}.
	All compatible vectors $(u_{\mathcal{A}^{(m)}},u_{m+1}^N)$, $m\in[N]$, with \eqref{eq:linear_system} form an affine subspace. The dimension of this subspace is equal to $d^\prime_N(y_1^N) = \ed{N}-|\mathcal{F}^{(m)}|-\rank(\G^\prime_{N})$. Since removing one more column of $\G^\prime_{N}$ (and also of $\G^\prime_{\mathcal{F}^{(m)}}$) cannot decrease the rank by more than one, we have
	\begin{equation}\label{eq:dim_bound}
		d^\prime_N(y_1^N)\leq d^\prime_N(\tilde{y}_1^N)\leq d^\prime_N(y_1^N)+1.
	\end{equation}
	Hence, the number of compatible vectors $(u_{\mathcal{A}^{(m)}},u_{m+1}^N)$ with \eqref{eq:linear_system} is (at most) doubled or unchanged.
	
	We are interested in the subspace dimension $d_m(y_1^N)$. This is equal to the number of different subvectors $u_{\mathcal{A}^{(m)}}$ of all compatible $(u_{\mathcal{A}^{(m)}},u_{m+1}^N)$ with \eqref{eq:linear_system}. Using \eqref{eq:dim_bound}, one concludes that the number of different vectors $u_{\mathcal{A}^{(m)}}$ either increases by a factor of $2$ or does not change, resulting in \eqref{eq:lipschitz}.
\end{proof}
\begin{corollary} \label{cor:conc_bec}
		The subspace dimension $D_m$ concentrates around its mean \ed{$\bar{D}_m$} for sufficiently large block lengths, i.e., for any $\beta>0$, we have
		\begin{equation} \label{eq:conc_bec}
		\Pr\left\{\frac{1}{N}|D_m-\ed{\bar{D}_m}|>\beta\right\}\leq 2\exp\left(-\frac{\beta^2}{2}N\right).
		\end{equation}
	\end{corollary}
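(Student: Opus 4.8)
The plan is to mirror the argument already used for Corollary~\ref{cor:conc_bms}, replacing the general BMS Lipschitz constant $4|\log_2\delta|$ with the unit constant supplied by Proposition~\ref{prop:lipschitz}. All the machinery is in place. By the earlier proposition, the sequence $H_0^{(m)},\dots,H_N^{(m)}$ with $H_i^{(m)}=\E[D_m|Y_1^i]$ is a Doob's martingale, and it has the two endpoints we care about: $H_0^{(m)}=\E[D_m]=d_m$ and $H_N^{(m)}=D_m$, the latter because $D_m=d_m(Y_1^N)$ is a deterministic function of the full observation $Y_1^N$. Thus concentrating $D_m$ about $d_m$ is exactly concentrating $H_N^{(m)}$ about $H_0^{(m)}$, which is precisely what the Azuma--Hoeffding inequality controls.

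First I would establish a bounded-differences bound on the martingale increments. Proposition~\ref{prop:lipschitz} asserts that altering a single coordinate of the received word changes the subspace dimension by at most one. Since the channel is memoryless, the coordinates $Y_1,\dots,Y_N$ are independent, so the standard bounded-differences argument applies: passing from conditioning on $Y_1^{i-1}$ to conditioning on $Y_1^i$ only reveals $Y_i$, and integrating the Lipschitz-$1$ estimate over the distribution of the remaining coordinates yields $|H_i^{(m)}-H_{i-1}^{(m)}|\leq 1$ for every $i\in[N]$.

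Then I would invoke the Azuma--Hoeffding inequality~\cite[Thm.~12.6]{Mitzenmacher05} with increment bounds $c_i=1$, so that $\sum_{i=1}^{N}c_i^2=N$. This gives $\Pr\{|D_m-d_m|\geq\lambda\}\leq 2\exp(-\lambda^2/(2N))$; substituting $\lambda=\beta N$ and dividing the event by $N$ produces the claimed bound $2\exp(-\beta^2 N/2)$, matching \eqref{eq:conc_bec}.

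The step requiring the most care is the translation of the pointwise Lipschitz property of Proposition~\ref{prop:lipschitz} into the bound on the martingale increments. The Lipschitz condition bounds $|d_m(y_1^N)-d_m(\tilde y_1^N)|$ only for \emph{fixed} realizations differing in one coordinate, whereas the martingale increment is itself an expectation over the remaining coordinates; closing this gap is exactly where the independence of the $Y_i$ (memorylessness) enters, and it is the sole nontrivial ingredient. Everything else is a direct specialization of the proof of Corollary~\ref{cor:conc_bms} with $|\log_2\delta|$ replaced by the unit Lipschitz constant, which is why the statement can be recorded as a short corollary.
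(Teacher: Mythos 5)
Your proposal is correct and follows exactly the paper's route: the paper's proof of this corollary is a one-line invocation of the Azuma--Hoeffding inequality applied to the Doob martingale $H_i^{(m)}=\E[D_m|Y_1^i]$, using the Lipschitz-$1$ bound of Proposition~\ref{prop:lipschitz} in place of the constant $4|\log_2\delta|$ from the BMS case. Your additional care in converting the pointwise bounded-difference property into a bound on the martingale increments (via independence of the $Y_i$) is precisely the step buried inside the cited theorem, so nothing in your argument deviates from the paper's.
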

	\begin{proof}
		As for Corollary \ref{cor:conc_bms}, apply the Azuma-Hoeffding inequality\cite[Thm.~\ed{12.4}]{Mitzenmacher05} via Proposition \ref{prop:lipschitz}.
\end{proof}

\begin{remark}
Let $\rho = \lceil \log_2 m \rceil$ and $N_0 = 2^\rho$.
Due to the recursive structure of SCL decod\ed{ing}, the statistics of $D_m$ are the same for all $N\geq N_0$ if the first $N_0$ frozen bits are the same.
Thus, Corollary~\ref{cor:conc_bec} remains valid if we replace~\eqref{eq:conc_bec} by
		\begin{equation}
		\Pr\left\{\frac{1}{N_0}|D_m-\ed{\bar{D}_m}|>\beta\right\}\leq 2\exp\left(-\frac{\beta^2}{2}N_0\right).
		\end{equation}
This provides a significant improvement when $N_0 \ll N$.
The same idea can also be applied to Corollary~\ref{cor:conc_bms} but the value of $\delta$ must be modified as well.
\end{remark}

\edd{Note that the bounds of the form \eqref{eq:concentration_bms} are typically loose (see \cite[Sec. IV]{RU01} for a discussion on tightness of the concentration results for the performance of a randomly chosen \ac{LDPC} code around the ensemble average). Nevertheless, such analysis shows that the mean $\bar{D}_m$ under consideration is meaningful.}
	\section{Simulation Results}
\label{sec:numerical}
This section provides simulation results for some constructions with dynamic frozen bits. In particular, we consider instances from an ensemble of modified RM codes, namely \own{\ac{dRM}} codes\cite{CNP20}.
\begin{definition}\label{def:dRM}
	The \own{\ac{dRM}}$(r,m)$ ensemble, denoted by $\ensemble(r,m)$, is the set of all codes, specified by set $\mathcal{A}$ of the RM$(r,m)$ code and setting, for $i\in\mathcal{F}$,
	\begin{equation}
	    u_i = \begin{cases}
	        \sum_{j\in\mathcal{A}^{(i-1)}} v_{j,i}u_j & \text{if }\mathcal{A}^{(i-1)}\neq\emptyset\\
	        0 & \text{otherwise}.
	    \end{cases}    
	\end{equation}
	with all possible $v_{j,i}\in\{0,1\}$ and $\mathcal{A}^{(0)}\triangleq\emptyset$.
\end{definition}

Recently, Ar{\i}kan introduced \ac{PAC} codes\cite{arikan2019sequential}, which can be represented as a polar code with dynamic frozen bits \cite{YFV21:PAC}. The rate-profiling choice of a PAC code is directly reflected in the frozen index set of its polar code representation\cite{YFV21:PAC}. Thus, if $\mathcal{A}$ of an RM$(r,m)$ code is chosen as the rate-profiling (as in \cite{arikan2019sequential}), then the corresponding PAC code becomes an instance from $\ensemble(r,m)$. Another instance is the RM$(r,m)$ code. %\own{With an abuse of Definition~\ref{def:dRM}, we call an instance from $\ensemble(r,m)$ as a \own{\ac{dRM}}$(r,m)$ code.}

\subsection{The Binary-Input Additive White Gaussian Channel}\label{sec:numerical_awgn}
Fig.~\ref{fig:list_evo0} shows simulation results for a random instance from $\ensemble(3,7)$ and a novel design (based on suggestions in Remark \ref{remark:design}) under SCL decoding with $L=2^{14}$ and $E_b/N_0=0.5$ together with the upper and lower bounds \eqref{eq:bounds_UB} and \eqref{eq:bounds_LB} on $\ed{\bar{D}_m}$. The proposed code takes the set $\mathcal{A}_{\mathrm{RM}}$ of the $(128,64)$ RM code and obtains a new set as $\mathcal{A} = (\mathcal{A}_{\mathrm{RM}}\setminus\{30,40\})\cup\{1,57\}$, i.e., $u_{\{30,40\}}$ are frozen and $u_{\{1,57\}}$ are unfrozen, where each frozen bit is still set to a random linear combination of preceding information bit(s). \ed{Fig.~\ref{fig:list_evo0} validates the bounds \eqref{eq:bounds_LB}, \eqref{eq:bounds_UB}, \eqref{eq:bound2} and \eqref{eq:bound3}. Note that we set the parameter $\alpha=0.94$ in \eqref{eq:bound2} to provide a robust estimate by capturing the near misses, which happen if there are decoding paths with probabilities slightly larger than that of the correct path. To understand this better, consider the proposed design where $u_1$ is an information bit. If one sets $\alpha = 1$, we get
\begin{equation}
    \E\left[\log_{2}|\mathcal{S}_1^{(1)}|\right] \approx 0.5\log_21+0.5\log_22 = 0.5\label{eq:expected_log_rank_first_bit}
\end{equation}
which follows from $\bar{D}_m \approx 1$. Observing Fig. \ref{fig:list_evo0}, we obtain $\E\left[\log_{2}|\mathcal{S}_{0.94}^{(1)}|\right]\approx 1$. Therefore, tightness of \eqref{eq:bound2}, especially at early decoding stages, is impacted by the choice of $\alpha$.\footnote{\ed{One may further reduce the threshold $\alpha$ for inclusion to find a better match of $\E\left[\log_{2}|\mathcal{S}_\alpha^{(m)}|\right]$ to $\bar{D}_m$ for the entire range.}} Our numerical results show that the curve for $\E\left[\log_{2}|\mathcal{S}_\alpha^{(m)}|\right]$ is more robust to changes in $\alpha$ at late decoding stages, i.e., for larger values of $m$. This means that the near misses happen at early decoding stages more often. In addition, observe that} \eqref{eq:bounds_LB} closely tracks the simulation for $m\leq 50$ \ed{and it is easy to compute via \edd{standard methods, e.g., we used Gaussian approximation of density evolution\cite{Trifonov:2012}}, which further motivate using it for code design. Following Remark~\ref{remark:design}, we chose a relatively small $\nicefrac{E_b}{N_0}$ for the analysis, e.g., close to the Shannon limit ($\sim0.189$ dB) for rate-$\nicefrac{1}{2}$ codes, since we are after the rare-event probabilities.}
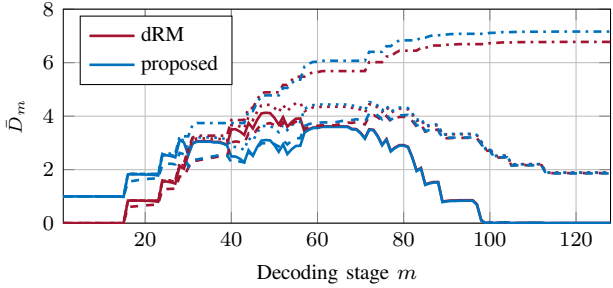
\begin{figure}    
	\centering
	\begin{tikzpicture}%[font=\footnotesize]
\begin{axis}[
mark options={solid,scale=0.75},
%every axis plot/.append style={thick},
width=1*\linewidth,
height=0.5*\columnwidth,
title style={font=\footnotesize,align=center},
legend cell align=left,
legend style={font=\footnotesize},
legend columns=1,
legend pos=north west,
ylabel near ticks,
xlabel near ticks,
xmin=1,
xmax=128,
%x dir=reverse,
ymin=0,
ymax=8,
xlabel={\textcolor{black}{\own{Decoding stage} $m$}},
ylabel={\textcolor{black}{\scriptsize{$\ed{\bar{D}_m}$}}},
grid=both,
%major grid style={solid,draw=gray!50},
%minor grid style={densely dotted,draw=gray!50},
label style={font=\footnotesize},
tick label style={font=\footnotesize},
]

\addplot[color=myParula07Red,line width = 1pt,solid] table[x=index,y=entropy] {figures/results_LB_df_RM_n=128_SNR=0.5.txt};
\addlegendentry{\own{\ac{dRM}}}

\addplot[color=myParula01Blue,line width = 1pt,solid] table[x=index,y=entropy] {figures/results_LB_henry_propsed_df_RM_n=128_SNR=0.5_L=16192.txt};
\addlegendentry{proposed}

\addplot[color=myParula07Red,line width = 1pt,dotted] table[x=index,y=entropy] {figures/results_H_num_df_RM_n=128_SNR=0.5_L=16384.txt};

\addplot[color=myParula01Blue,line width = 1pt,dotted] table[x=index,y=entropy] {figures/results_H_num_henry_propsed_df_RM_n=128_SNR=0.5_L=16384.txt};

\addplot[color=myParula07Red,line width = 1pt,dashed] table[x=index,y=entropy] {figures/results_ElogRank_df_RM_n=128_SNR=0.5_L=16384.txt};

\addplot[color=myParula01Blue,line width = 1pt,dashed] table[x=index,y=entropy] {figures/results_ElogRank_henry_propsed_df_RM_n=128_SNR=0.5_L=16384.txt};

\addplot[color=myParula07Red,line width = 1pt,dash dot] table[x=index,y=entropy] {figures/results_UB_df_RM_n=128_SNR=0.5.txt};

\addplot[color=myParula01Blue,line width = 1pt,dash dot] table[x=index,y=entropy] {figures/results_UB_henry_propsed_df_RM_n=128_SNR=0.5_L=16192.txt};

\end{axis}
\end{tikzpicture}
	\caption{$\ed{\bar{D}_m}$ vs. $m$ at $E_b/N_0=0.5$ dB for $(128,64)$ codes (dash-dotted: upper bound \eqref{eq:bounds_UB}, solid: lower bound \eqref{eq:bounds_LB}, dotted: $\ed{\bar{D}_m}$ via simulation, dashed: $\E[\log_{2}|\mathcal{S}_{0.94}^{(m)}|]$ via simulation).}\label{fig:list_evo0}
\end{figure}

\ed{The changes leading to the new design help} especially for the considered list size $L=32$ \ed{by inspecting Fig. \ref{fig:performance}. The proposed code outperforms the \own{\ac{dRM}} code especially at higher SNR values with this list size.} The reason is illustrated by the lower bounds on \ed{$\ed{\bar{D}_m}$} in Fig.~\ref{fig:list_evo0}. In addition to having a smaller peak value, this peak occurs for the proposed design later than for the \own{\ac{dRM}} code. This helps for the proposed code to \ed{not loose the correct path at early decoding stages, and hence,} to keep \ed{it} in the list towards the end for small list sizes, e.g., $L=32$. If the list size is further decreased, then having $u_1$ as information bit can cause a degradation.\footnote{\ed{In particular, if $L=1$, \eqref{eq:expected_log_rank_first_bit} suggests that the correct path would be lost roughly half of the time already after first decoding stage.}} This validates the analysis illustrated in Fig.~\ref{fig:list_evo0}. The performance for the 5G design \ed{employing the \ac{CRC}-$11$ defined by the generator polynomial $g(x) = x^{11} + x^{10} + x^{9} + x^5 + 1$~\cite[Sec. 5.1]{5G20},\cite{BCL21}} under SCL decoding with $L=32$ is $0.4$ dB worse than the proposed design at a \own{\ac{BLER}} around $10^{-4}$. When SCL \ed{decoding} with $L=128$ is considered, both codes \ed{outperform the 5G design by no less than $0.25$ dB at all \acp{BLER} considered. In particular, they perform} within $0.15$ dB of the \ac{RCU} bound~\cite[Thm.~16]{Polyanskiy10:BOUNDS} at \own{the \ac{BLER}} of $10^{-5}$ and they almost match the simulation-based ML lower bounds\cite{tal15}, denoted as ML LB in the figure. \ed{Note that the \ac{PAC} code perform very close to the \ac{dRM} code under \ac{SCL} decoding with $L\in\{32,128\}$~\cite[Fig. 1]{YFV21:PAC}.} The \ac{MC} bound \cite[Thm.~28]{Polyanskiy10:BOUNDS} is also provided.
\begin{figure}
	\centering
	\begin{tikzpicture}%[font=\footnotesize]
\begin{semilogyaxis}[
mark options={solid,scale=0.75},
%every axis plot/.append style={thick},
width=1*\linewidth,
height=0.75*\columnwidth,
title style={font=\footnotesize,align=center},
legend cell align=left,
legend style={font=\footnotesize},
legend columns=3,
legend style={at={(0.45,-0.175)},anchor=north,draw=none,/tikz/every even column/.append style={column sep=1mm},cells={align=left}},
ylabel near ticks,
xlabel near ticks,
xmin=1,
xmax=3.5,
%x dir=reverse,
ymin=1e-5,
ymax=1e-1,
xlabel={\textcolor{black}{$E_b/N_0$, dB}},
ylabel={\textcolor{black}{BLER}},
grid=both,
major grid style={solid,draw=gray!50},
minor grid style={densely dotted,draw=gray!50},
label style={font=\footnotesize},
tick label style={font=\footnotesize},
]

\addplot[myparula61] table[x=ebn0,y=P] {figures/128_64_df_RMproposed_henry_awgn_L=32.txt};
\addlegendentry{proposed $L=32$}
\addplot[myparula71] table[x=ebn0,y=P] {figures/128_64_df_RM_awgn_L=32.txt};
\addlegendentry{\own{\ac{dRM}} $L=32$}
\addplot[myparula31] table[x=ebn0,y=P] {figures/128_64_5G_awgn_L=32_CRC11.txt};
\addlegendentry{5G $L=32$}

\addplot[myparula64] table[x=ebn0,y=P] {figures/128_64_df_RMproposed_henry_awgn_L=128.txt};
\addlegendentry{proposed $L=128$}
\addplot[myparula74] table[x=ebn0,y=P] {figures/128_64_df_RM_awgn_L=128.txt};
\addlegendentry{\own{\ac{dRM}} $L=128$}
\addplot[myparula34] table[x=ebn0,y=P] {figures/128_64_5G_awgn_L=128_CRC11.txt};
\addlegendentry{5G $L=128$}

\addplot[color=myParula06LightBlue,line width=1pt,solid] table[x=ebn0,y=P] {figures/128_64_df_RMproposed_henry_ML_LB_awgn.txt};
\addlegendentry{proposed ML LB}
\addplot[color=myParula07Red,line width=1pt,solid] table[x=ebn0,y=P] {figures/128_64_df_RM_ML_LB_awgn.txt};
\addlegendentry{\own{\ac{dRM}} ML LB}

\addplot[color=black,line width=1pt,dashed] table[x=snr,y=P] {figures/128_64_awgn_rcu.txt};
\addlegendentry{RCU \textcolor{black}{$\boldsymbol{\cdot\cdot\cdot\cdot\cdot}$} MC}
\addplot[color=black,line width=1pt,dotted] table[x=snr,y=P] {figures/128_64_awgn_mc.txt};
%\addlegendentry{MC}

\end{semilogyaxis}
\end{tikzpicture}
	\vspace*{-5mm}
	\caption{BLER vs. SNR for $(128,64)$ codes.}\label{fig:performance}
\end{figure}
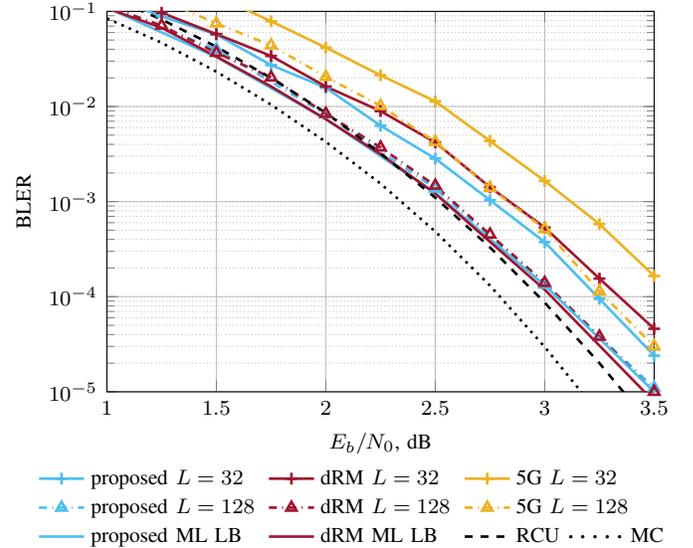

\begin{figure}
	\centering
	\begin{tikzpicture}%[font=\footnotesize]
\begin{axis}[
mark options={solid,scale=0.75},
%every axis plot/.append style={thick},
width=1*\linewidth,
height=0.55*\columnwidth,
title style={font=\footnotesize,align=center},
legend cell align=left,
legend style={font=\footnotesize},
legend columns=1,
ylabel near ticks,
xlabel near ticks,
xmin=1,
xmax=512,
%x dir=reverse,
ymin=0,
ymax=25,
xlabel={\textcolor{black}{\own{Decoding stage} $m$}},
ylabel={\textcolor{black}{\scriptsize{$\ed{\bar{D}_m}$}}},
ytick={0, 4, 10,  20,  30},
grid=both,
%major grid style={solid,draw=gray!50},
%minor grid style={densely dotted,draw=gray!50},
label style={font=\footnotesize},
tick label style={font=\footnotesize},
]

\addplot[color=myParula07Red,line width = 1pt,solid] table[x=index,y=entropy] {figures/results_LB_df_RM_n=512_SNR=0.5.txt};
\addlegendentry{\own{\ac{dRM}}}

\addplot[color=myParula05Green,line width = 1pt,solid] table[x=index,y=entropy] {figures/results_LB_hybrid_n=512_SNR=0.5.txt};
\addlegendentry{Code-$1$}

\addplot[color=myParula04Purple,line width = 1pt,solid] table[x=index,y=entropy] {figures/results_LB_hybrid_small_list_n=512_SNR=0.5.txt};
\addlegendentry{Code-$2$}

\addplot[color=myParula01Blue,line width = 1pt,solid] table[x=index,y=entropy] {figures/results_LB_proposed_n=512_SNR=0.5.txt};
\addlegendentry{Code-$3$}

\addplot[color=myParula03Yellow,line width = 1pt,solid] table[x=index,y=entropy] {figures/results_LB_df_beta_n=512_SNR=0.5.txt};
\addlegendentry{PW}

\end{axis}
\end{tikzpicture}
	%\vspace*{-5mm}
	\caption{Lower bound \eqref{eq:bounds_LB} on $\ed{\bar{D}_m}$ \own{vs. $m$} at $E_b/N_0=0.5$ dB for $(512,256)$ codes.}\label{fig:list_evo512_256}
\end{figure}
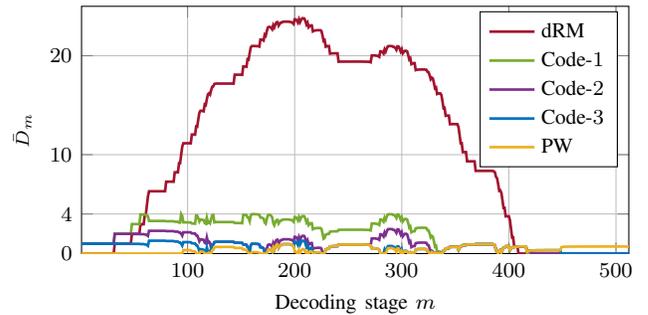
Next, consider moderate-length codes, e.g., $(512,256)$ codes, which are more challenging to design if the decoders are restricted to be of low- to moderate-complexity, i.e., $L\leq 1024$\cite[Sec. 5.2]{Coskun18:Survey}\cite{ILX:survey16}. Fig.~\ref{fig:list_evo512_256} provides the bounds \eqref{eq:bounds_LB} for \own{\ac{dRM}} codes and three novel designs. The peak of the lower bound corresponding to the \own{\ac{dRM}} code gets close to $25$ and recall that this quantity is related to the logarithm of the required list size \ed{on average}. This explains why SCL \ed{decoding} needs very large list sizes for a good performance when used for the RM$(4,9)$ (or a \own{\ac{dRM}}$(4,9)$) code\cite{Mondelli14}. At the other extreme, the lower bound is provided for the construction based on the \ac{PW} method with $\beta = 2^{\nicefrac{1}{4}}$\cite{beta_polar}, which is more suitable for SCL decoding with small list sizes. The idea behind the designs is similar to the length-$128$ case: we start from the information positions of an RM code, modify the positions to lower the peak value and keep the curve flat so that there is enough entropy kept on the list to make use of reliable frozen positions for a good performance. To this end, we also introduced $u_1$ as information bit in all three designs. This would harm the performance if the list size is very small, e.g., $L\leq 4$. In modifying the designs, we used the information positions from the \ac{PW} construction. The information positions for the designs are provided in the appendix.

\begin{figure}
	\centering
	\begin{tikzpicture}%[font=\footnotesize]
\begin{semilogyaxis}[
mark options={solid,scale=0.75},
%every axis plot/.append style={thick},
width=1*\linewidth,
height=0.75*\columnwidth,
title style={font=\footnotesize,align=center},
legend cell align=left,
legend style={font=\footnotesize},
legend columns=3,
legend style={at={(0.45,-0.175)},anchor=north,draw=none,/tikz/every even column/.append style={column sep=-0.15mm},cells={align=left}},
ylabel near ticks,
xlabel near ticks,
xmin=1,
xmax=3.0,
%x dir=reverse,
ymin=1e-6,
ymax=1e-1,
xlabel={\textcolor{black}{$E_b/N_0$, dB}},
ylabel={\textcolor{black}{BLER}},
grid=both,
major grid style={solid,draw=gray!50},
minor grid style={densely dotted,draw=gray!50},
label style={font=\footnotesize},
tick label style={font=\footnotesize},
]

\addplot[myparula51] table[x=ebn0,y=P] {figures/512_256_hybrid_awgn_L=32.txt};
\addlegendentry{Code-$1$ $L=32$}
\addplot[myparula41] table[x=ebn0,y=P] {figures/512_256_hybrid_small_list_awgn_L=32.txt};
\addlegendentry{Code-2 $L=32$}
\addplot[myparula61] table[x=ebn0,y=P] {figures/512_256_proposed_awgn_L=32.txt};
\addlegendentry{Code-3 $L=32$}
%\addplot[myparula42] table[x=ebn0,y=P] {figures/512_256_hybrid_small_list_awgn_L=8.txt};
%\addlegendentry{Code-$2$ $L=8$}
%\addplot[myparula62] table[x=ebn0,y=P] {figures/512_256_proposed_awgn_L=8.txt};
%\addlegendentry{Code-$3$ $L=8$}

\addplot[myparula54] table[x=ebn0,y=P] {figures/512_256_hybrid_awgn_L=128.txt};
\addlegendentry{Code-1 $L=128$}
%\addplot[myparula53] table[x=ebn0,y=P] {figures/512_256_hybrid_awgn_L=512.txt};
%\addlegendentry{Code-1 $L=512$}
\addplot[myparula44] table[x=ebn0,y=P] {figures/512_256_hybrid_small_list_awgn_L=128.txt};
\addlegendentry{Code-2 $L=128$}
\addplot[myparula64] table[x=ebn0,y=P] {figures/512_256_proposed_awgn_L=128.txt};
\addlegendentry{Code-3 $L=128$}

\addplot[myparula55] table[x=ebn0,y=P] {figures/512_256_hybrid_awgn_L=1024.txt};
\addlegendentry{Code-1 $L=1024$}
\addplot[myparula45] table[x=ebn0,y=P] {figures/512_256_hybrid_small_list_awgn_L=1024.txt};
\addlegendentry{Code-2 $L=1024$}

\addplot[color=myParula06LightBlue,line width=1pt,solid] table[x=ebn0,y=P] {figures/512_256_proposed_ML_LB_awgn.txt};
\addlegendentry{Code-3 ML LB}
\addplot[color=myParula05Green,line width=1pt,solid] table[x=ebn0,y=P] {figures/512_256_hybrid_ML_LB_awgn.txt};
\addlegendentry{Code-1 ML LB}
\addplot[color=myParula04Purple,line width=1pt,solid] table[x=ebn0,y=P] {figures/512_256_hybrid_small_list_ML_LB_awgn.txt};
\addlegendentry{Code-2 ML LB}

\addplot[color=black,line width=1pt,dashed] table[x=snr,y=P] {figures/512_256_awgn_rcu.txt};
\addlegendentry{RCU \textcolor{black}{$\boldsymbol{\cdot\cdot\cdot\cdot\cdot}$} MC}
\addplot[color=black,line width=1pt,dotted] table[x=snr,y=P] {figures/512_256_awgn_mc.txt};

\end{semilogyaxis}
\end{tikzpicture}
	\vspace*{-5mm}
	\caption{BLER vs. SNR for $(512,256)$ codes.}\label{fig:performance2}
\end{figure}
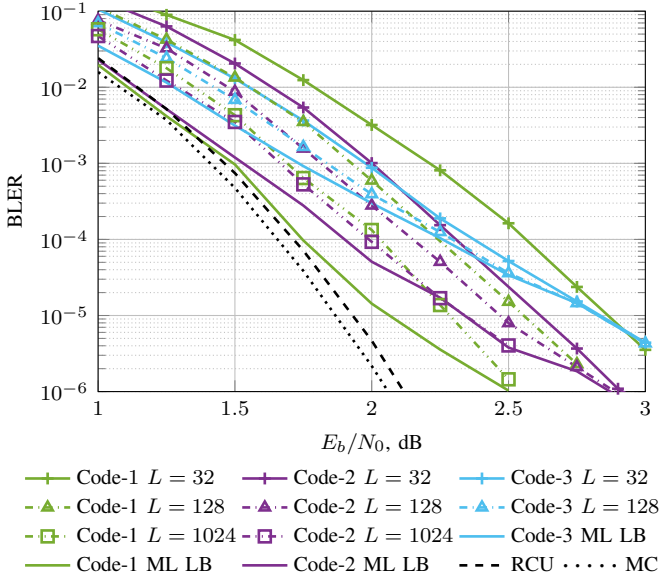
\begin{figure}
	\centering
	\begin{tikzpicture}%[font=\footnotesize]
\begin{semilogyaxis}[
mark options={solid,scale=0.75},
%every axis plot/.append style={thick},
width=1*\linewidth,
height=0.75*\columnwidth,
title style={font=\footnotesize,align=center},
legend cell align=left,
legend style={font=\footnotesize},
legend columns=3,
legend style={at={(0.45,-0.175)},anchor=north,draw=none,/tikz/every even column/.append style={column sep=-0.15mm},cells={align=left}},
ylabel near ticks,
xlabel near ticks,
xmin=1,
xmax=3.0,
%x dir=reverse,
ymin=1e-6,
ymax=1e-1,
xlabel={\textcolor{black}{$E_b/N_0$, dB}},
ylabel={\textcolor{black}{BLER}},
grid=both,
major grid style={solid,draw=gray!50},
minor grid style={densely dotted,draw=gray!50},
label style={font=\footnotesize},
tick label style={font=\footnotesize},
]

\addplot[myparula32] table[x=ebn0,y=P] {figures/512_256_5G_awgn_L=8_CRC7.txt};
\addlegendentry{Polar + CRC-$7$ $L=8$}
\addplot[myparula62] table[x=ebn0,y=P] {figures/512_256_proposed_awgn_L=8.txt};
\addlegendentry{Code-3 $L=8$}
\addplot[color=black,line width=1pt,dashed] table[x=snr,y=P] {figures/512_256_awgn_rcu.txt};
\addlegendentry{RCU}

\addplot[myparula31] table[x=ebn0,y=P] {figures/512_256_5G_awgn_L=32_CRC7.txt};
\addlegendentry{Polar + CRC-$7$ $L=32$}
\addplot[myparula61] table[x=ebn0,y=P] {figures/512_256_proposed_awgn_L=32.txt};
\addlegendentry{Code-3 $L=32$}
\addplot[color=black,line width=1pt,dotted] table[x=snr,y=P] {figures/512_256_awgn_mc.txt};
\addlegendentry{MC}

\addplot[myparula34] table[x=ebn0,y=P] {figures/512_256_5G_awgn_L=128_CRC7.txt};
\addlegendentry{Polar + CRC-$7$ $L=128$}
\addplot[myparula64] table[x=ebn0,y=P] {figures/512_256_proposed_awgn_L=128.txt};
\addlegendentry{Code-3 $L=128$}

\end{semilogyaxis}
\end{tikzpicture}
	\vspace*{-5mm}
	\caption{\edd{BLER vs. SNR for $(512,256)$ concatenated polar codes with CRC-$7$ compared to Code-$3$.}}\label{fig:5G_CRC7}
\end{figure}
\begin{figure}
	\centering
	\begin{tikzpicture}%[font=\footnotesize]
\begin{semilogyaxis}[
mark options={solid,scale=0.75},
%every axis plot/.append style={thick},
width=1*\linewidth,
height=0.75*\columnwidth,
title style={font=\footnotesize,align=center},
legend cell align=left,
legend style={font=\footnotesize},
legend columns=3,
legend style={at={(0.45,-0.175)},anchor=north,draw=none,/tikz/every even column/.append style={column sep=1mm},cells={align=left}},
ylabel near ticks,
xlabel near ticks,
xmin=1,
xmax=3.0,
%x dir=reverse,
ymin=5e-7,
ymax=1e-1,
xlabel={\textcolor{black}{$E_b/N_0$, dB}},
ylabel={\textcolor{black}{BLER}},
grid=both,
major grid style={solid,draw=gray!50},
minor grid style={densely dotted,draw=gray!50},
label style={font=\footnotesize},
tick label style={font=\footnotesize},
]

\addplot[myparula32] table[x=ebn0,y=P] {figures/512_256_5G_awgn_L=8.txt};
\addlegendentry{5G $L=8$}
\addplot[myparula31] table[x=ebn0,y=P] {figures/512_256_5G_awgn_L=32.txt};
\addlegendentry{5G $L=32$}
\addplot[myparula34] table[x=ebn0,y=P] {figures/512_256_5G_awgn_L=128_CRC11.txt};
\addlegendentry{5G $L=128$}
\addplot[myparula42] table[x=ebn0,y=P] {figures/512_256_hybrid_small_list_awgn_L=8.txt};
\addlegendentry{Code-2 $L=8$}
%\addplot[myparula62] table[x=ebn0,y=P] {figures/512_256_proposed_awgn_L=8.txt};
%\addlegendentry{Code-3 $L=8$}

\addplot[myparula41] table[x=ebn0,y=P] {figures/512_256_hybrid_small_list_awgn_L=32.txt};
\addlegendentry{Code-2 $L=32$}
%\addplot[myparula61] table[x=ebn0,y=P] {figures/512_256_proposed_awgn_L=32.txt};
%\addlegendentry{Code-3 $L=32$}
%\addplot[myparula54] table[x=ebn0,y=P] {figures/512_256_hybrid_awgn_L=128.txt};
%\addlegendentry{Code-1 $L=128$}

\addplot[myparula44] table[x=ebn0,y=P] {figures/512_256_hybrid_small_list_awgn_L=128.txt};
\addlegendentry{Code-2 $L=128$}
%\addplot[myparula64] table[x=ebn0,y=P] {figures/512_256_proposed_awgn_L=128.txt};
%\addlegendentry{Code-3 $L=128$}
%\addplot[myparula55] table[x=ebn0,y=P] {figures/512_256_hybrid_awgn_L=1024.txt};
%\addlegendentry{Code-1 $L=1024$}

\addplot[myparula35] table[x=ebn0,y=P] {figures/512_256_5G_awgn_L=1024_CRC11.txt};
\addlegendentry{5G $L=1024$}
\addplot[myparula45] table[x=ebn0,y=P] {figures/512_256_hybrid_small_list_awgn_L=1024.txt};
\addlegendentry{Code-2 $L=1024$}

%\addplot[color=myParula03Yellow,line width=1pt,solid] table[x=ebn0,y=P] {figures/512_256_5G_ML_LB_awgn_L=1024_CRC11.txt};
%\addlegendentry{5G ML LB}

\addplot[color=black,line width=1pt,dashed] table[x=snr,y=P] {figures/512_256_awgn_rcu.txt};
\addlegendentry{RCU \textcolor{black}{$\boldsymbol{\cdot\cdot\cdot\cdot\cdot}$} MC}
\addplot[color=black,line width=1pt,dotted] table[x=snr,y=P] {figures/512_256_awgn_mc.txt};

\end{semilogyaxis}
\end{tikzpicture}
	\vspace*{-5mm}
	\caption{BLER vs. SNR for $(512,256)$ 5G codes compared to \own{Code-$2$}.}\label{fig:performance3}
\end{figure}
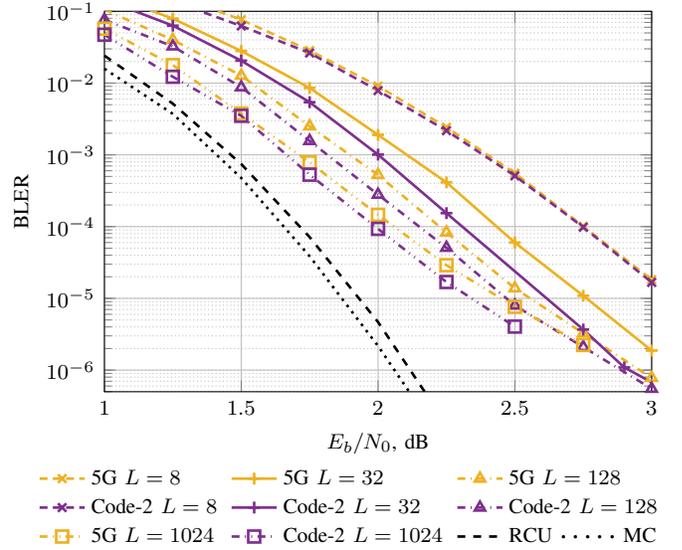
\begin{figure}
	\centering
	\begin{tikzpicture}%[font=\footnotesize]
\begin{semilogyaxis}[
mark options={solid,scale=0.75},
%every axis plot/.append style={thick},
width=1*\linewidth,
height=0.75*\columnwidth,
title style={font=\footnotesize,align=center},
legend cell align=left,
legend style={font=\footnotesize},
legend columns=3,
legend style={at={(0.45,-0.175)},anchor=north,draw=none,/tikz/every even column/.append style={column sep=-0.15mm},cells={align=left}},
ylabel near ticks,
xlabel near ticks,
xmin=1,
xmax=3.0,
%x dir=reverse,
ymin=1e-6,
ymax=1e-1,
xlabel={\textcolor{black}{$E_b/N_0$, dB}},
ylabel={\textcolor{black}{BLER}},
grid=both,
major grid style={solid,draw=gray!50},
minor grid style={densely dotted,draw=gray!50},
label style={font=\footnotesize},
tick label style={font=\footnotesize},
]

\addplot[myparula31] table[x=ebn0,y=P] {figures/512_256_5G_awgn_L=32_CRC16.txt};
\addlegendentry{Polar + CRC-$16$ $L=32$}
\addplot[myparula51] table[x=ebn0,y=P] {figures/512_256_hybrid_awgn_L=32.txt};
\addlegendentry{Code-1 $L=32$}
\addplot[color=black,line width=1pt,dashed] table[x=snr,y=P] {figures/512_256_awgn_rcu.txt};
\addlegendentry{RCU}

\addplot[myparula34] table[x=ebn0,y=P] {figures/512_256_5G_awgn_L=128_CRC16.txt};
\addlegendentry{Polar + CRC-$16$ $L=128$}
\addplot[myparula54] table[x=ebn0,y=P] {figures/512_256_hybrid_awgn_L=128.txt};
\addlegendentry{Code-1 $L=128$}
\addplot[color=black,line width=1pt,dotted] table[x=snr,y=P] {figures/512_256_awgn_mc.txt};
\addlegendentry{MC}

\addplot[myparula35] table[x=ebn0,y=P] {figures/512_256_5G_awgn_L=1024_CRC16.txt};
\addlegendentry{Polar + CRC-$16$ $L=1024$}
\addplot[myparula55] table[x=ebn0,y=P] {figures/512_256_hybrid_awgn_L=1024.txt};
\addlegendentry{Code-1 $L=1024$}

\end{semilogyaxis}
\end{tikzpicture}
	\vspace*{-5mm}
	\caption{\edd{BLER vs. SNR for $(512,256)$ concatenated polar codes with CRC-$16$ compared to Code-$3$.}}\label{fig:5G_CRC16}
\end{figure}
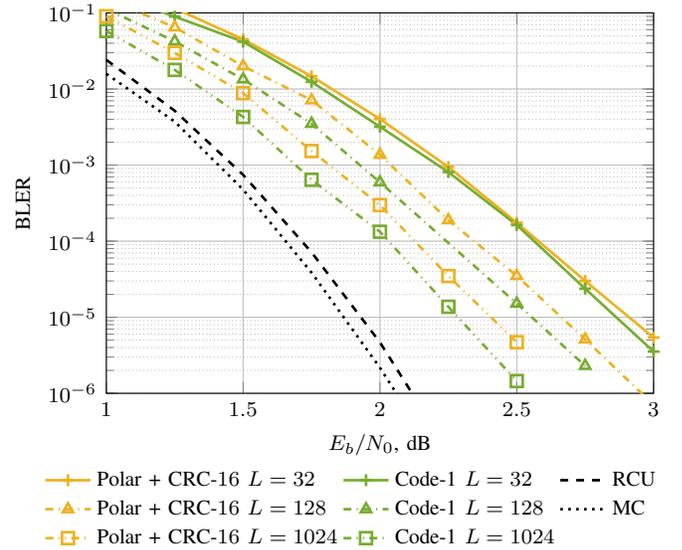
Fig.~\ref{fig:performance2} compares the performance of three designs under different list sizes. Code-$1$ requires the largest list size to get closer to its ML performance. When a large list size is adopted, e.g., $L\in\{1024\}$, it performs within $0.4$ dB of the \ac{RCU} bound at \own{\acp{BLER}} around $10^{-6}$, outperforming the non-binary LDPC code defined over $\mathbb{F}_{256}$ which has a higher decoding complexity\cite{ILX:survey16}. Nevertheless, even with $L=1024$, there is a non-negligible gap to the ML lower bound at \own{\acp{BLER}} above $10^{-6}$. Code-$2$ is competitive for a wide range of list sizes, i.e., $L\in\edd{[8,1024]}$. In particular, it performs within $0.75$ dB from the RCU bound down to the \own{\ac{BLER}} of $10^{-6}$ under SCL decoding with $L=32$, outperforming the 5G design \ed{employing the \ac{CRC}-$11$ with the generator polynomial $g(x) = x^{11} + x^{10} + x^{9} + x^5 + 1$} by around $0.2$ dB under the same list size \ed{at a \ac{BLER} close to $10^{-6}$} (see Fig.~\ref{fig:performance3}). \ed{For $E_b/N_0\geq 2.75$, Code-$2$ under \ac{SCL} decoding with $L=32$ performs as good as the 5G design under \ac{SCL} decoding with $L=128$.} When an even smaller list size considered, e.g., $L=8$, then Code-$2$ and the 5G design are indistinguishable while Code-$3$, which is expected to \ed{require smaller} list sizes \ed{to approach its \ac{ML} performance} (see Fig.~\ref{fig:list_evo512_256}), outperforms them at all \own{\acp{BLER}} \edd{as can be seen in Fig.~\ref{fig:5G_CRC7}, where it is also compared to a polar code concatenated with the \ac{CRC}-$7$ with the generator polynomial $g_7(x) = x^7 + x^6 + x^5 + x^2 + 1$.\footnote{\edd{The polynomial is taken from \cite{Yuan19} as it provides the best performance for the $(128,64)$ case although it may not be optimal for the $(512,256)$ code.}}} With a relatively small list size, e.g., $L=32$, Code-$3$ reaches to its ML performance at \own{the \ac{BLER}} of $10^{-5}$ or less. \edd{Fig.~\ref{fig:5G_CRC16} compares Code-$1$ to a polar code concatenated with the CRC-$16$ specified in \cite[Sec. 5.1]{5G20} with the generating polynomial $g_{16}(x) = x^{16}+x^{12}+x^{5}+1$. For the considered list sizes, Code-$1$ outperforms the modified polar code. In addition,} Code-$1$ performs very similar to the 5G design when $L=128$, it provides sizeable gains, e.g., $0.35$ dB, at \acp{BLER} close to $10^{-6}$ when $L=1024$. \edd{Note finally that for all polar codes (irrespective of the chosen \ac{CRC} length) provided as reference in this work the indices of frozen bits are selected according to the 5G standard~\cite[Sec. 5.1]{5G20},\cite{BCL21}.}

\subsection{The Binary Erasure Channel}
\label{sec:num_results_bec}
To understand the accuracy of the analysis and approximations presented in Sections~\ref{sec:analysis} and \ref{sec:concentration}, we simulated SCI \ed{decoding} with consolidations \ed{for a \emph{\own{\ac{dRM}} ensemble sequence}, where the $\ell$-th ensemble in the sequence is defined to be $\ensemble(\ell,2\ell+1)$. Observe that each ensemble in the sequence consists of rate-$\nicefrac{1}{2}$ codes. Note that as the block length changes, a random instance from the corresponding ensemble is picked for the simulations and the numerical results are quite similar for different instances chosen randomly. These simulations provide realizations of the random process $D_{1},\ldots,D_{N}$ and we compare their mean to the theoretical predictions \eqref{eq:Markov_app} and \eqref{eq:recursive_app} in \edd{Figs.} \ref{fig:inac_evo_512} and \ref{fig:inac_evo_8192} for the case of $N=512$ and $N=8192$, respectively. These results show that for a random code in $\ensemble(\ell,2\ell+1)$ the simulation mean is close to the simple \edd{recursive formula given as \eqref{eq:recursive_app}. Observe also that the mean computed by approximating the probabilities via Markov model} \eqref{eq:Markov_app} matches the mean slightly better for large values of $m$. With an increasing blocklength, \eqref{eq:Markov_app} and \eqref{eq:recursive_app} match the simulations better for large values of $m$.}

\begin{figure}    
	\centering
	\begin{tikzpicture}%[font=\footnotesize]
\footnotesize
\begin{axis}[
mark options={solid,scale=1},
%every axis plot/.append style={thick},
width=1*\linewidth,
height=0.55*\columnwidth,
legend cell align=left,
legend columns=1,
%legend style={at={(0.5,-0.125)},anchor=north,draw=none,/tikz/every even column/.append style={column sep=3mm},cells={align=left}},
%legend style={at={(0.6,0.425)}},
%legend pos=north east,
ylabel near ticks,
xlabel near ticks,
xmin=1,
xmax=512,
%x dir=reverse,
ymin=0,
xlabel={\textcolor{black}{\own{Decoding stage} $m$}},
ylabel={\textcolor{black}{$\Bar{D}_m$}},
grid=both,
major grid style={solid,draw=gray!50},
minor grid style={densely dotted,draw=gray!50},
%label style={font=\footnotesize},
%tick label style={font=\footnotesize},
]

\addplot[color=myParula07Red,line width = 1.5pt,dotted] table[x=index,y=entropy] {figures/results_ExpectedEqRM_n=512_BEC.txt};
\addlegendentry{Sim.}

\addplot[color=myParula01Blue,line width = 1pt,dashed] table[x=index,y=entropy] {figures/results_qm_RM_n=512.txt};
\addlegendentry{\edd{Via} \eqref{eq:Markov_app}}
\addplot[color=myParula05Green,line width = 1pt] table[x=index,y=entropy] {figures/results_qsm_RM_n=512.txt};
\addlegendentry{\eqref{eq:recursive_app}}

\end{axis}
\end{tikzpicture}
	\caption{\ed{$\Bar{D}_m$ vs. $m$ at $\epsilon = 0.48$ for an instance from $\ensemble(4,9)$.}}\label{fig:inac_evo_512}
\end{figure}
\begin{figure}[t]   
	\centering
	\begin{tikzpicture}%[font=\footnotesize]
\footnotesize
\begin{axis}[
mark options={solid,scale=1},
%every axis plot/.append style={thick},
width=1*\linewidth,
height=0.55*\columnwidth,
legend cell align=left,
legend columns=1,
%legend style={at={(0.5,-0.125)},anchor=north,draw=none,/tikz/every even column/.append style={column sep=3mm},cells={align=left}},
%legend style={at={(0.6,0.425)}},
%legend pos=north east,
ylabel near ticks,
xlabel near ticks,
xmin=1,
xmax=8192,
xtick={1600,3200,4800,6400,8000},
%x dir=reverse,
ymin=0,
xlabel={\textcolor{black}{\own{Decoding stage} $m$}},
ylabel={\textcolor{black}{$\Bar{D}_m$}},
grid=both,
major grid style={solid,draw=gray!50},
minor grid style={densely dotted,draw=gray!50},
%label style={font=\footnotesize},
%tick label style={font=\footnotesize},
]

\addplot[color=myParula07Red,line width = 1.5pt,dotted] table[x=index,y=entropy] {figures/results_ExpectedEqRM_n=8192_BEC=0.48.txt};
\addlegendentry{Sim.}

\addplot[color=myParula01Blue,line width = 1pt,dashed] table[x=index,y=entropy] {figures/results_qm_RM_n=8192.txt};
\addlegendentry{\edd{Via} \eqref{eq:Markov_app}}
\addplot[color=myParula05Green,line width = 1pt] table[x=index,y=entropy] {figures/results_qsm_RM_n=8192.txt};
\addlegendentry{\eqref{eq:recursive_app}}

\end{axis}
\end{tikzpicture}
	\caption{\ed{$\Bar{D}_m$ vs. $m$ at $\epsilon = 0.48$ for an instance from $\ensemble(6,13)$.}}\label{fig:inac_evo_8192}
\end{figure}
One weakness of these bounds is that the channel variation (e.g., in the number of erasures) significantly increases the variation in $D_{1}^{N}$\ed{, especially for small to medium blocklengths, e.g., recall Fig.~\ref{fig:inac_evo_512} for the case of $N=512$}. To highlight the similarity between the theory and simulation \ed{even for such cases}, we use a fixed-weight BEC that chooses a random pattern with exactly $\texttt{round}(N\epsilon)$ erasures. To motivate this, note that density evolution naturally captures the typical behavior of the analyzed system~\cite{Richardson:2008:MCT:1795974}. Fig.~\ref{fig:inac_evo} shows that\ed{, even for the $(512,256)$ \ac{dRM} code, the simulation mean is close to the analysis for the entire range of $m$}. The $15$ random simulation traces lie largely within the $90\%$ confidence range of the Markov chain analysis \own{\eqref{eq:Markov_app}}.
\begin{figure}[t]
	\centering
	\begin{tikzpicture}%[font=\footnotesize]
\begin{axis}[
mark options={solid,scale=0.75},
%every axis plot/.append style={thick},
width=1*\linewidth,
height=0.55*\columnwidth,
title style={font=\footnotesize,align=center},
legend cell align=left,
legend style={font=\footnotesize},
legend columns=4,
legend style={at={(0.5,-0.25)},anchor=north,draw=none,/tikz/every even column/.append style={column sep=3mm},cells={align=left}},
ylabel near ticks,
xlabel near ticks,
xmin=1,
xmax=512,
%x dir=reverse,
ymin=0,
ymax=35,
xlabel={\textcolor{black}{\own{Decoding stage} $m$}},
ylabel={\textcolor{black}{\scriptsize{$\ed{\bar{D}_m}$}}},
grid=both,
major grid style={solid,draw=gray!50},
minor grid style={densely dotted,draw=gray!50},
label style={font=\footnotesize},
tick label style={font=\footnotesize},
]

\addplot[color=myParula07Red,line width = 1.5pt, dotted] table[x=index,y=entropy] {figures/results_ExpectedEqRM_n=512.txt};
\addlegendentry{Sim.}

%\addplot[color=myParula01Blue,line width = 1pt] table[x=index,y=entropy] {figures/results_qm_RM_n=512.txt};
%\addlegendentry{\eqref{eq:Markov_app}}
\addplot[color=myParula05Green,line width = 1pt,solid] table[x=index,y=entropy] {figures/results_qsm_RM_n=512.txt};
\addlegendentry{\eqref{eq:recursive_app}}
\addplot[color=myParula01Blue,line width = 1pt,dashed] table[x=index,y=entropy] {figures/results_ql_RM_n=512.txt};
\addlegendentry{$90\%$ Conf. \edd{via} \eqref{eq:Markov_app}}
\addplot[color=gray,line width = 0.5pt,dash dot] table[x=index,y=entropy] {figures/results_trace1_RM_n=512.txt};
\addlegendentry{Traces Sim.}
\addplot[color=myParula01Blue,line width = 1pt,dashed] table[x=index,y=entropy] {figures/results_qu_RM_n=512.txt};

\addplot[color=gray,line width = 0.5pt,dotted] table[x=index,y=entropy] {figures/results_trace2_RM_n=512.txt};
%\addlegendentry{$(128,64)$ RM}
\addplot[color=gray,line width = 0.5pt,dotted] table[x=index,y=entropy] {figures/results_trace3_RM_n=512.txt};
%\addlegendentry{$(128,64)$ random}
\addplot[color=gray,line width = 0.5pt,dotted] table[x=index,y=entropy] {figures/results_trace4_RM_n=512.txt};
%\addlegendentry{$(128,64)$ eBCH}
\addplot[color=gray,line width = 0.5pt,dotted] table[x=index,y=entropy] {figures/results_trace5_RM_n=512.txt};
%\addlegendentry{$(128,64)$ RM}
\addplot[color=gray,line width = 0.5pt,dotted] table[x=index,y=entropy] {figures/results_trace6_RM_n=512.txt};
%\addlegendentry{$(128,64)$ 7d-RM}
\addplot[color=gray,line width = 0.5pt,dotted] table[x=index,y=entropy] {figures/results_trace7_RM_n=512.txt};
%\addlegendentry{$(128,64)$ d-RM}
\addplot[color=gray,line width = 0.5pt,dotted] table[x=index,y=entropy] {figures/results_trace8_RM_n=512.txt};
%\addlegendentry{$(128,64)$ 7d-RM}
\addplot[color=gray,line width = 0.5pt,dotted] table[x=index,y=entropy] {figures/results_trace9_RM_n=512.txt};
%\addlegendentry{$(128,64)$ eBCH-pol.}
\addplot[color=gray,line width = 0.5pt,dotted] table[x=index,y=entropy] {figures/results_trace10_RM_n=512.txt};
%\addlegendentry{$(128,64)$ eBCH}
\addplot[color=gray,line width = 0.5pt,dotted] table[x=index,y=entropy] {figures/results_trace11_RM_n=512.txt};
%\addlegendentry{$(128,64)$ $log_2(E[L_m])$}
\addplot[color=gray,line width = 0.5pt,dotted] table[x=index,y=entropy] {figures/results_trace12_RM_n=512.txt};
%\addlegendentry{$(128,64)$ RM}
\addplot[color=gray,line width = 0.5pt,dotted] table[x=index,y=entropy] {figures/results_trace13_RM_n=512.txt};
%\addlegendentry{$(128,64)$ random}
\addplot[color=gray,line width = 0.5pt,dotted] table[x=index,y=entropy] {figures/results_trace14_RM_n=512.txt};
%\addlegendentry{$(128,64)$ eBCH}
\addplot[color=gray,line width = 0.5pt,dotted] table[x=index,y=entropy] {figures/results_trace15_RM_n=512.txt};
%\addlegendentry{$(128,64)$ RM}

\end{axis}
\end{tikzpicture}
	\vspace*{-5mm}
	\caption{$\ed{\bar{D}_m}$ vs. $m$ for randomly permuted $246$ erasures.}\label{fig:inac_evo}
\end{figure}
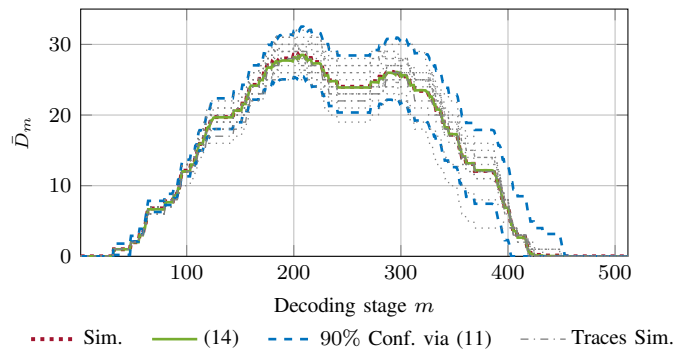

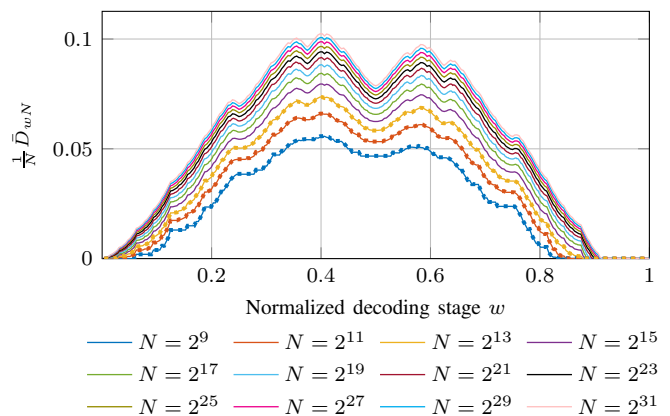
\begin{figure}    
	\centering
	\begin{tikzpicture}%[font=\footnotesize]
\begin{axis}[
mark options={solid,scale=0.75},
%every axis plot/.append style={thick},
width=1*\linewidth,
height=0.55*\columnwidth,
title style={font=\footnotesize,align=center},
legend cell align=left,
legend style={font=\footnotesize},
legend columns=4,
legend style={at={(0.5,-0.25)},anchor=north,draw=none,/tikz/every even column/.append style={column sep=1mm},cells={align=left}},
ylabel near ticks,
xlabel near ticks,
xmin=0,
xmax=1,
%x dir=reverse,
ymin=0,
xlabel={\textcolor{black}{\own{Normalized decoding stage} $w$}},
ylabel={\textcolor{black}{\scriptsize{$\frac{1}{N}\ed{\bar{D}}_{wN}$}}},
yticklabel style={/pgf/number format/.cd,fixed,precision=2},
xtick = {0.2,0.4,0.6,0.8,1},
grid=both,
major grid style={solid,draw=gray!50},
minor grid style={densely dotted,draw=gray!50},
label style={font=\footnotesize},
tick label style={font=\footnotesize},
]

\addplot[color=myParula01Blue,line width = 0.5pt,solid] table[x=index,y=entropy] {figures/results_qsm_RM_n=512_SNR=0.48.txt};
\addlegendentry{$N=2^{9}$}
\addplot[color=myParula02Orange,line width = 0.5pt,solid] table[x=index,y=entropy] {figures/results_qsm_RM_n=2048_SNR=0.48.txt};
\addlegendentry{$N=2^{11}$}
\addplot[color=myParula03Yellow,line width = 0.5pt,solid] table[x=index,y=entropy] {figures/results_qsm_RM_n=8192_SNR=0.48.txt};
\addlegendentry{$N=2^{13}$}
\addplot[color=myParula04Purple,line width = 0.5pt,solid] table[x=index,y=entropy] {figures/results_qsm_RM_n=32768_SNR=0.48.txt};
\addlegendentry{$N=2^{15}$}
\addplot[color=myParula05Green,line width = 0.5pt,solid] table[x=index,y=entropy] {figures/results_qsm_RM_n=131072_SNR=0.48.txt};
\addlegendentry{$N=2^{17}$}
\addplot[color=myParula06LightBlue,line width = 0.5pt,solid] table[x=index,y=entropy] {figures/results_qsm_RM_n=524288_SNR=0.48.txt};
\addlegendentry{$N=2^{19}$}
\addplot[color=myParula07Red,line width = 0.5pt,solid] table[x=index,y=entropy] {figures/results_qsm_RM_n=2097152_SNR=0.48.txt};
\addlegendentry{$N=2^{21}$}
\addplot[color=black,line width = 0.5pt,solid] table[x=index,y=entropy] {figures/results_qsm_RM_n=8388608_SNR=0.48.txt};
\addlegendentry{$N=2^{23}$}
\addplot[color=olive,line width = 0.5pt,solid] table[x=index,y=entropy] {figures/results_qsm_RM_n=33554432_SNR=0.48.txt};
\addlegendentry{$N=2^{25}$}
\addplot[color=magenta,line width = 0.5pt,solid] table[x=index,y=entropy] {figures/results_qsm_RM_n=134217728_SNR=0.48.txt};
\addlegendentry{$N=2^{27}$}
\addplot[color=cyan,line width = 0.5pt,solid] table[x=index,y=entropy] {figures/results_qsm_RM_n=536870912_SNR=0.48.txt};
\addlegendentry{$N=2^{29}$}
\addplot[color=pink,line width = 0.5pt,solid] table[x=index,y=entropy] {figures/results_qsm_RM_n=2147483648_SNR=0.48.txt};
\addlegendentry{$N=2^{31}$}
\addplot[color=myParula01Blue,line width = 1.5pt,dotted] table[x=index,y=entropy] {figures/sim_results_qsm_RM_n=512_SNR=0.48.txt};
\addplot[color=myParula02Orange,line width = 1.5pt,dotted] table[x=index,y=entropy] {figures/sim_results_qsm_RM_n=2048_SNR=0.48.txt};
\addplot[color=myParula03Yellow,line width = 1.5pt,dotted] table[x=index,y=entropy] {figures/sim_results_qsm_RM_n=8192_SNR=0.48.txt};

\end{axis}
\end{tikzpicture}
	\vspace*{-5mm}
	\caption{$\frac{1}{N}\ed{\bar{D}}_{wN}$ vs. $w$ (solid: using \eqref{eq:recursive_app} with an erasure probability $\epsilon=0.48$, dashed: simulations for the fixed-weight BEC with $\texttt{round}(N\epsilon)$ erasures) for instances from \own{\ac{dRM}} code ensembles $\ensemble(\ell,2\ell+1)$, $\ell\in\{4,5,\dots,15\}$.} \label{fig:list_growth}
\end{figure}
Next, we study how the subspace dimension behaves as the block length increases \ed{for the introduced \ac{dRM} ensemble sequence}. Let $w\triangleq\frac{m}{N}$, $m\in[N]$, be the normalized decoding stage. Fig.~\ref{fig:list_growth} provides the normalized dimension $\frac{1}{N}\ed{\bar{D}}_{wN}$ as a function of $w$ for the samples of \own{\ac{dRM}} ensemble sequence with different block lengths, from $N=2^9$ up to $N=2^{31}$. \ed{We stress again the accuracy of the analysis \eqref{eq:recursive_app} validated by simulations} up to $N=2^{13}$ and we believe that the results for larger block lengths are also accurate. The asymptotic behavior of $\frac{1}{N}d_{wN}$ gives the asymptotic decoding complexity of an ML decoder implemented via SCI \ed{decoding}. This provides insight into the asymptotic decoding complexity of RM codes to achieve the capacity over the BEC\cite{kudekar17} for two reasons: first, RM code is a member of the ensemble, and second, the simulation results look very similar for RM codes as well up to $N=2^{13}$. Other decoding algorithms might give improvements on the complexity, but we are not aware of a lower-complexity ML decoder than SCI decoding for RM codes \ed{in the case of the BEC}. Fig.~\ref{fig:list_growth} shows that the convergence seems to be rather slow for the defined sequence. Interesting directions include understanding what happens to $\frac{1}{N}\ed{\bar{D}}_{wN}$ as $N\rightarrow\infty$ analytically and trying to define code sequences where $\max_{w}\frac{1}{N}\ed{\bar{D}}_{wN}$ is significantly \ed{smaller} than that of \own{\ac{dRM}} codes, but still perform competitively.

\ed{Next, Fig.~\ref{fig:dm_dist} provides the \ac{p.m.f.} for $\frac{1}{N}D_{\left[0.4N\right]}$, where $[wN]$ is the nearest integer to $wN$, $w\in(0,1]$. The parameter $w$ is set to $0.4$ since the mean analysis given in Fig.~\ref{fig:list_growth} shows that, for the considered codes, the mean reaches to its maximum around $w=0.4$. Interestingly, the \acp{p.m.f.} concentrate around the mean $\frac{1}{N}\Bar{D}_{\left[0.4N\right]}$ with increasing block length.}
\begin{figure}[t]
	\centering
	\begin{tikzpicture}%[font=\footnotesize]
\begin{axis}[
mark options={solid,scale=0.75},
%every axis plot/.append style={thick},
width=1*\linewidth,
height=0.55*\columnwidth,
title style={font=\footnotesize,align=center},
legend cell align=left,
legend style={font=\footnotesize},
legend columns=1,
ylabel near ticks,
xlabel near ticks,
xmin=0,
xmax=0.1,
%x dir=reverse,
ymin=0,
ymax=0.08,
xlabel={\textcolor{black}{Normalized subspace dimension $d$}},
ylabel={\textcolor{black}{$\Pr\left(\frac{1}{N}D_{\left[0.4N\right]} = d\right)$}},
xticklabel style={
           /pgf/number format/fixed,
            /pgf/number format/fixed zerofill,
            /pgf/number format/precision=2
        },
xtick = {0.02,0.04,0.06,0.08,0.10},
grid=both,
major grid style={solid,draw=gray!50},
minor grid style={densely dotted,draw=gray!50},
label style={font=\footnotesize},
tick label style={font=\footnotesize},
]

%\addplot[only marks,mark=x,color=myParula04Purple] table[x=index,y=prob] {figures/subsdimAtMI_n=128Erasure=0.48.txt};
%\addlegendentry{$N=2^{7}$}
\addplot[only marks,mark=x,color=myParula01Blue] table[x=index,y=prob] {figures/subsdimAtMI_n=512Erasure=0.48.txt};
\addlegendentry{$N=2^{9}$}
\addplot[only marks,mark=x,color=myParula02Orange] table[x=index,y=prob] {figures/subsdimAtMI_n=2048Erasure=0.48.txt};
\addlegendentry{$N=2^{11}$}
\addplot[only marks,mark=x,color=myParula03Yellow] table[x=index,y=prob] {figures/subsdimAtMI_n=8192Erasure=0.48.txt};
\addlegendentry{$N=2^{13}$}

\end{axis}
\end{tikzpicture}
	%\vspace*{-5mm}
	\caption{\ed{$\Pr\left(\frac{1}{N}D_{\left[0.4N\right]} = d\right)$ vs. $d$ for an erasure probability $\epsilon=0.48$ for instances from \own{\ac{dRM}} code ensembles $\ensemble(\ell,2\ell+1)$, $\ell\in\{4,5,6\}$.}} \label{fig:dm_dist}
\end{figure}
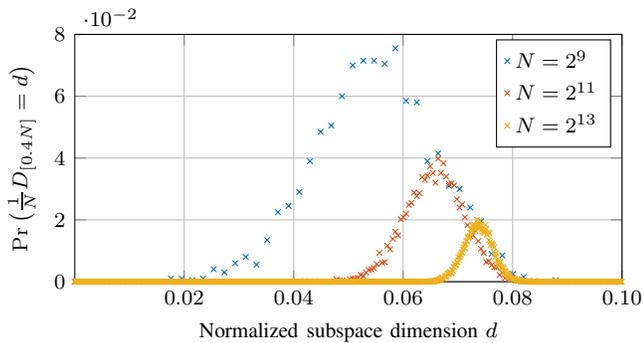

Finally, Fig.~\ref{fig:performance_dynamic} compares the ML decoding performance of Code-$1$ designed in Section~\ref{sec:numerical_awgn} to that of the RM$(4,9)$ and a random instance from $\ensemble(4,9)$ \ed{under} SCI \ed{decoding}.\footnote{\ed{Note that the performance for the $(512,256)$ RM and dRM codes were also provided in \cite{CNP20}. Here, we provide the performance curve for an exemplary construction in addition to Fig. \ref{fig:inac_evo_RM_vs_design}, which illustrate the competitive performance of Code-$1$ for wide range of erasure probabilities while requiring much smaller subspace dimensions under SCI decoding with consolidations (or list lengths under SCL decoding) to achieve ML decoding.}} We declare an error whenever the linear system corresponding to the channel output does not provide a unique solution. The code performs as well as the RM$(4,9)$ code for frame error rates equal to $10^{-5}$ or higher with much lower average subspace dimension during decoding as illustrated in Fig.~\ref{fig:inac_evo_RM_vs_design}. Thus, we can match random code performance (like \own{\ac{dRM}}) down to the \own{\ac{BLER}} of $10^{-4}$ while performing well with smaller number of inactivations (i.e., shorter list sizes). For \own{\ac{BLER}} of $10^{-5}$ or less, it experiences more severe error floor compared to the RM$(4,9)$ code. The Singleton bound~\cite{Singleton64} and the Berlekamp's random coding \own{(BRC)} bound~\cite{Berlekamp80} are provided as reference.
\begin{figure}
	\centering
	\begin{tikzpicture}%[font=\footnotesize]
\begin{semilogyaxis}[
mark options={solid,scale=0.75},
%every axis plot/.append style={thick},
width=1*\linewidth,
height=0.75*\columnwidth,
title style={font=\footnotesize,align=center},
legend cell align=left,
legend style={font=\footnotesize},
legend pos=north east,
ylabel near ticks,
xlabel near ticks,
xmin=0.36,
xmax=0.48,
x dir=reverse,
ymin=1e-6,
ymax=1e-1,
xlabel={\textcolor{black}{Erasure probability $\epsilon$}},
ylabel={\textcolor{black}{BLER}},
grid=both,
major grid style={solid,draw=gray!50},
minor grid style={densely dotted,draw=gray!50},
label style={font=\footnotesize},
tick label style={font=\footnotesize},
]

\addplot[myparula51] table[x=erasure,y=P] {figures/512_256_hybridsubcode_bec.txt};
\addlegendentry{Code-$1$}
\addplot[myparula22] table[x=erasure,y=P] {figures/512_256_RM_bec.txt};
\addlegendentry{RM$(4,9)$}
\addplot[myparula74] table[x=erasure,y=P] {figures/512_256_RMsubcode_bec.txt};
\addlegendentry{\own{\ac{dRM}}$(4,9)$}
\addplot[color=black,line width = 1pt,dashed] table[x=erasure,y=P] {figures/512_256_berlekamp_bec.txt};
\addlegendentry{\own{BRC}}
\addplot[color=black,line width = 1pt,dotted] table[x=erasure,y=P] {figures/512_256_singleton_bec.txt};
\addlegendentry{\own{Singleton}}
%\addplot[myparula72] table[x=erasure,y=P] {figures/2048_1024_RMsubcode_bec.txt};
%\addlegendentry{d-RM$(5,11)$}
%\addplot[myparula72] table[x=erasure,y=P] {figures/8192_4096_RMsubcode_bec.txt};
%\addlegendentry{d-RM$(6,13)$}

\end{semilogyaxis}
\end{tikzpicture}
	\vspace*{-5mm}
	\caption{BLER vs. $\epsilon$ for $(512,256)$ codes.}\label{fig:performance_dynamic}
\end{figure}
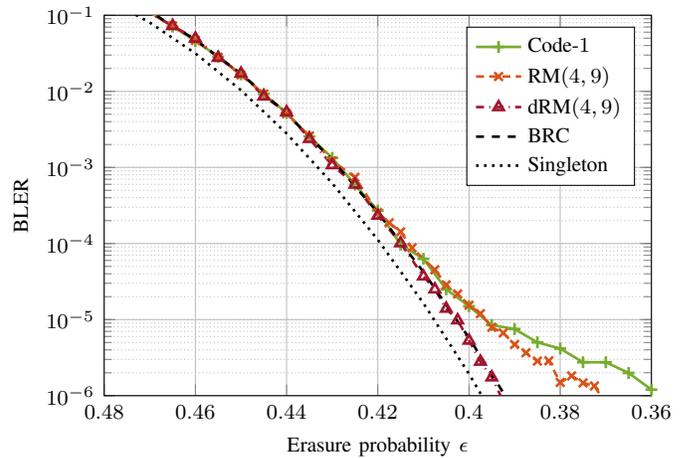
\begin{figure}[t]   
	\centering
	\begin{tikzpicture}%[font=\footnotesize]
\begin{axis}[
mark options={solid,scale=0.75},
%every axis plot/.append style={thick},
width=1*\linewidth,
height=0.55*\columnwidth,
title style={font=\footnotesize,align=center},
legend cell align=left,
legend style={font=\footnotesize},
ylabel near ticks,
xlabel near ticks,
xmin=1,
xmax=512,
%x dir=reverse,
ymin=0,
xlabel={\textcolor{black}{\own{Decoding stage} $m$}},
ylabel={\textcolor{black}{\scriptsize{$\ed{\bar{D}_m}$}}},
grid=both,
major grid style={solid,draw=gray!50},
minor grid style={densely dotted,draw=gray!50},
label style={font=\footnotesize},
tick label style={font=\footnotesize},
]

\addplot[color=myParula07Red,line width = 1pt,solid] table[x=index,y=entropy] {figures/results_qsm_RM_n=512.txt};
\addlegendentry{\own{\ac{dRM}}}
\addplot[color=myParula05Green,line width = 1pt,solid] table[x=index,y=entropy] {figures/results_qs_Code1_n=512.txt};
\addlegendentry{Code-$1$}

\addplot[color=myParula07Red,line width = 1.5pt,dotted] table[x=index,y=entropy] {figures/results_ExpectedEqRM_n=512.txt};
\addplot[color=myParula05Green,line width = 1.5pt,dotted] table[x=index,y=entropy] {figures/results_ExpectedEqCode1_n=512.txt};

\end{axis}
\end{tikzpicture}
	%\vspace*{-5mm}
	\caption{$\ed{\bar{D}_m}$ vs. $m$ for Code-$1$ compared to the RM code (solid: using \eqref{eq:recursive_app} with an erasure probability of $\epsilon=0.48$, dashed: simulations for the fixed-weight BEC with $246$ erasures).} \label{fig:inac_evo_RM_vs_design}
\end{figure}
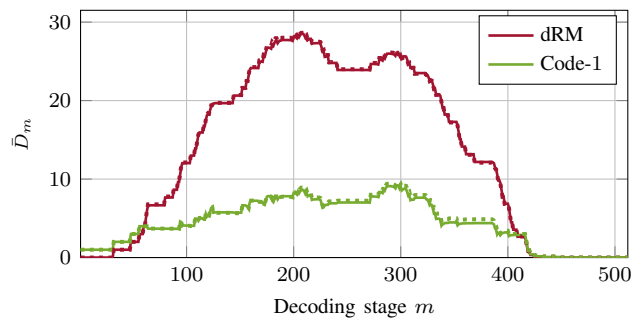

	\section{Conclusion}\label{sec:conc}
In this paper, we consider the theoretical question ``What list size suffices to achieve \own{maximum-likelihood} decoding performance under \own{successive cancellation list (SCL)} \ed{decoding}?''. The results identify information-theoretic quantities associated with the required list size \ed{on average} and also lead to \own{bounds} that can be computed efficiently even for very long codes. \own{It has been shown that t}he logarithm of the random required list size concentrates around the mean. A simple \own{and accurate} recursive approximation of this mean is provided for the \own{binary erasure channel (BEC)}.

Simulation results show that this approximation captures the dynamics of the required list size at each stage of decoding on the BEC. For general \own{binary memoryless symmetric} channels, e.g., \own{binary-input additive white Gaussian noise (BIAWGN) channel}, the analysis identified the key quantity $\ed{\Bar{D}_m}$ as a proxy for the uncertainty in SCL decoding. The analysis suggested codes with improved performance under SCL decoding with various list sizes, e.g., $L\in\edd{[8,1024]}$, for short- to moderate-length codes, over the BIAWGN channel\edd{, although the work primarily aims at understanding \ac{SCL} decoding from an information-theoretic perspective}. Future work should devise a constructive algorithm which take into account the introduced quantities to design good codes for SCL decoding with practical list sizes for \own{a wide} range of block lengths, e.g., for $N\leq 2048$.
	\appendix
\subsection{The Designs for $(512,256)$ Case}
Let $\mathcal{A}_{\mathrm{PW}}$ denote the set of information positions of the $(512,256)$ polar code designed according to \ac{PW} with $\beta=2^{\nicefrac{1}{4}}$\cite{beta_polar}. The sets $\mathcal{A}_1$, $\mathcal{A}_2$ and $\mathcal{A}_3$ corresponding to Code-$1$, Code-$2$ and Code-$3$, respectively, are given as follows:
\begin{align}
    \mathcal{A}_3 &= (\mathcal{A}_{\mathrm{PW}}\setminus\{449,450,451,453,457,465,481\})\\
    &\qquad\qquad\qquad\qquad\cup\{1,64,118,122,159,200,284\}, \\
    \mathcal{A}_2 &= (\mathcal{A}_3\setminus\{122,421,425,433\})\cup\{32,174,272,280\}, \\
    \mathcal{A}_1 &= (\mathcal{A}_2\setminus\{64,96,125,180,418,419\})\\
    &\qquad\qquad\qquad\qquad\cup\{48,56,94,108,122,152\}.
\end{align}
For all codes, each frozen bit is set to a random linear combination of preceding information bit(s). The performance curves will also be available on \cite{pretty_good_codes}.
	
	\section*{Acknowledgement}
	    The authors thank Gerhard Kramer (TUM)\own{, the associate editor and anonymous reviewers} for comments improving the presentation.
	%\IEEEtriggeratref{33}
	%\bibliographystyle{IEEEtran}
	%\bibliography{IEEEabrv,product}
	% Generated by IEEEtran.bst, version: 1.14 (2015/08/26)

\end{document}